\newtheorem{theorem}{Theorem}
\newtheorem{lemma}{Lemma}
\newtheorem{proof}{Proof}
\begin{document}

\title{Reflection Resource Management for Intelligent Reflecting Surface Aided Wireless Networks }
\author{Yulan Gao, \IEEEmembership{}
Chao Yong, \IEEEmembership{}
Zehui Xiong, \IEEEmembership{Student Member, IEEE,}
Jun Zhao, \IEEEmembership{Member, IEEE,}
Yue Xiao, \IEEEmembership{Member, IEEE,}
Dusit Niyato, \IEEEmembership{Fellow, IEEE}
\thanks{

Y. Gao is with the National Key Laboratory of Science
and Technology on Communications, University of Electronic
Science and Technology of China, Chengdu 611731, China, and also with
the School of Computer Science and Engineering, Nanyang Technological University,
Singapore.

C. Yong is with the National Key Laboratory of Science
and Technology on Communications, University of Electronic
Science and Technology of China, Chengdu 611731, China.

Z. Xiong, J. Zhao, and D. Niyato are with the School
of Computer Science and Engineering, Nanyang Technological University,
Singapore.

Y. Xiao is with the National Key Laboratory of Science
and Technology on Communications, University of Electronic
Science and Technology of China, Chengdu 611731, China,
and also  with the National Mobile Communications
Research Laboratory, Southeast University, Nanjing 210096, China.

}
}

\markboth{~}
\maketitle
\vspace{-3.3em}
\maketitle
\begin{abstract}
In this paper, the adoption of an intelligent reflecting surface (IRS) for multiple user pairs in two-hop networks is investigated.
Different from the existing studies on IRS that merely focused on tuning the reflection coefficients of all elements, we consider the \textit{true} reflection resource  management (RRM) which builds on the premise of the introduced modular IRS structure,  which is composed of multiple independent and controllable modules.
In particular, the \textit{true} RRM can be realized via the best triggered module subset identification.
To provide fairness among users, our goal, therefore, is to maximize the minimum signal-to-interference-plus-noise ratio (SINR) at destination terminals (DTs) via joint triggered module subset identification, transmit power allocation, and the corresponding passive beamforming, subject to per source terminal (ST) power budgets and module size constraint.
Whereas this problem is NP-hard due to the module size constraint,  to deal with it, we transform it into the group sparse constraint by introducing the mixed $\ell_{1,F}\text{-norm},$ which yields a suitable semidefinite relaxation.
In order to address this approximated problem, a two-block alternating direction method of multipliers (ADMM) algorithm is proposed based on its separable structure.
Numerical simulations are used to validate the analysis and assess the performance of the proposed algorithm as a function of the system parameters.
In addition,  energy efficiency (EE) performance comparisons demonstrate the necessity and meaningfulness of the introduced modular IRS structure.
Specifically, for a given network setting, there is an optimal value of the number of triggered modules, when the EE is considered.

\end{abstract}
\begin{IEEEkeywords}
Intelligent reflecting surface (IRS), transmit power allocation, passive beamforming, reflection resource management,  alternating direction and method of multipliers (ADMM), group sparsity.
\end{IEEEkeywords}
\IEEEpeerreviewmaketitle

\section{Introduction}

\subsection{Background and Motivation}
\IEEEPARstart{I}{t is} evident that the fast prolific spread of Internet-enabled mobile devices will bring to a 1000-fold increment of network capacity by 2020 \cite{whitepaper}, which can not be supported by the forth-generation (4G) mobile networks, i.e., long-term evolution (LTE) and LTE-advanced (LTE-A) technologies.
Therefore, a lot of attention from research community was mainly focused on the design of fifth generation (5G) wireless technologies, e.g., heterogeneous networks (HetNets), peer-to-peer (P2P) communications, massive multiple-input multiple-output (mMIMO), and mmWave communication, which should address high quality of service (QoS), coverage, seamless connectivity with a high user speed, and limited power consumption \cite{Grassi2017Uplink, Chen2014The, Boccardi2014Five}.
However, due to the highly demanding of forthcoming and future wireless networks (5G and beyond), a serious issue in the wireless industry today is to meet the soaring demand at the cost of resulting power consumption \cite{Grassi2017Uplink, NGMN2015}.
For instance, for mMIMO, adopting a higher amount of base station antennas to serve multiple users concurrently not only entails the increased radio frequency chains and maintenance cost, but also significantly decreases the overall performance level.
Therefore, addressing this issue means introducing innovation technologies in future/\mbox{beyond-5G} wireless networks, which are spectral-energy efficient and cost-effective \cite{Zenzo2019Wireless, Wu2019Towards}.
As a result, intelligent reflecting surface (IRS) was treated as a promising innovation  technology for future/\mbox{beyond-5G} wireless networks supporting reconfigurable wireless environment via exploiting large software-controlled reflecting elements \cite{Tan2018Enabling, Liu2019Intelligent, Li2017Electromagnetic, Di2019Hybrid, Gao2020Reconfigurable}.
The IRS provides a new degree of freedom to further enhance the wireless link performance via proactively steering the incident radio-frequency wave towards destination terminals (DTs) as its important feature, which is a promising solution to build a programmable wireless environment for 5G and beyond systems.

The IRS-aided communications refer to the scenario that a large number of software-controlled reflecting elements  with adjustable phase shifts for reflecting the incident signal.
As such,  the phase shifts of all reflecting elements can be tuned adaptively according to the state of networks, e.g., the channel conditions and the incident angle of the signal by the source terminal (ST).
Notably, different from the conventional half and full-duplex modes, in IRS-aided communications, the propagation environment can be improved without incurring additional noise at the reflecting elements.
Currently, considerable research attention was paid for IRS-aided communications \cite{Wu2019Towards, Wu2018Intelligent,Guo2019Weighted, Wu2019Beamforming, Bjornson2019Intelligent, Han2018Large,Wu2019Intelligentjournal,Huang2019Reconfigurable, Zenzo2019Reconfigurable, huang2020holographic}.
Among the early contributions in this area, \cite{Wu2019Towards, Zenzo2019Reconfigurable, huang2020holographic} summarized the main  applications and competitive advantages in IRS-aided systems.
For the IRS-aided point-to-point multiple-input-single-output (MISO) wireless system with single user, \cite{Wu2018Intelligent} investigated the total received signal power maximization problem by jointly optimizing the transmit beamforming and the passive beamforming.
In the spirit of these works, a vast corpus of literature focused on optimizing active-passive beamforming for unilateral spectral efficiency (SE) maximization subject to power constraint.
For instance, \cite{Guo2019Weighted} proposed a fractional programming based alternating optimization approach to maximize the weighted SE in IRS-aided MISO downlink communication systems.
In particular, three assumptions for the feasible set of reflection coefficient (RC) were consider at IRS, including the ideal RC constrained by peak-power, continuous phase shifter, and discrete phase shifter.
Meantime, in MISO wireless systems, the problem of minimizing the total transmit power at the access point was considered to
energy-efficient active-passive beamforming \cite{Wu2019Beamforming,Wu2019Intelligentjournal }.
\cite{Wu2019Beamforming} formulated and solved the total transmit power minimization problem by  active-passive beamforming design, subject to the signal-to-interference-plus-noise ratio (SINR) constraints, where each reflecting element is a continuous phase shifter.
Along this direction, considering the discrete reflect phase shifts at the IRS,  the same optimization problem was  studied in \cite{Wu2019Intelligentjournal}.
{In addition, to reduce implementation complexity of large dimensions optimization problems, the deep reinforcement learning technique was incorporated into optimal designs for IRS-aided MIMO systems \cite{huang2020reconfigurable}.
}
Notably, the aforementioned studies for IRS-aided communications were based on the assumption that the IRS power consumption is ignored.
In contrast, in \cite{Huang2019Reconfigurable},  an energy efficiency (EE) maximization problem was investigated by developing a realistic IRS power consumption model, where IRS power consumption relies on the type and the resolution of meta-element.

The common assumption in the existing studies for IRS-aided communications is that all the reflecting elements reflect the incident signal together, i.e., adjusting RC of each meta-element simultaneously.
However, along with the use of a large number of high-resolution reflecting elements, especially with continuous phase shifters, triggering all the reflecting elements every time may result in significant  IRS power consumption \cite{Mehanna2013Joint} and high implementation complexity.
Moreover, the hardware support for the IRS implementation is the use of a large number of tunable metasurfaces.
Specifically, the tunability feature can be realized by introducing mixed-signal integrated circuits (ICs) or diodes/varactors, which can  vary both the resistance and reactance, offering complete
local control over the complex surface impedance \cite{Liu2019Intelligent,Tan2018Enabling, Li2017Electromagnetic, Zenzo2019Smart}.
According to the IRS power consumption model presented in \cite{Huang2019Reconfigurable} and the hardware support, triggering the entire IRS all the time not only incurs increased power consumption, but also entails the increased latency of adjusting phase-shift  and accelerates equipment depreciation.
To this end, it is necessary and valuable to achieve the \textit{true} RRM in IRS-aided systems.

\subsection{Novelty and Contribution}
In this paper, we consider the two-hop P2P network in which multiple single-antenna STs reach the corresponding single-antenna DTs through an IRS that forwards a suitably phase-shifted version of the transmitted signal.
The goal is to maximize the minimum SINR at DTs via joint design of triggered module subset identification, transmit power allocation, and the corresponding passive beamformer.
Specifically, the novelty and contributions of this paper are summarized in the following aspects.
\begin{enumerate}
\item{ \textit{Modular IRS Structure:} { For the first time, we introduce a modular IRS structure, which composes of  multiple independent and controllable modules.
    At the modular IRS, each module is attached with a smart controller, where all the controllers are physically connected via dedicated fiber links, thus, they can exchange information with each other in a point-to-point fashion.
    IRS can be programmable and controlled by the controller, and hence,  from an operational standpoint, independent module triggering can be implemented easily.
    }
}
 \item{\textit{Triggered Module Subset Identification:}
{The \textit{true} RRM can be realized via the triggered module subset identification, which is based on the introduced modular IRS structure where the reflecting elements are triggered independently by the controller to which they belong.
 Specifically, we formulate the problem of maximizing the minimum SINR subject to the maximum transit power at each ST and the module size constraint.
 Unfortunately, due to the module size constraint, obtaining a globally optimal solution  requires an exhaustive combinatorial search over all possible cases, where the NP-hard max-min SINR problem must be solved for each of these cases.
To this end, we resort to a low-complexity and efficient approximate solution.
In particular, a convex relaxation to the hard module size constraint can be derived by replacing the $\ell_0\text{-norm}$ by the convex norm $\ell_{1, F}\text{-norm}$.
  Based on this insight, a tractable convex problem can be formulated from the perspective of group sparse optimization.
 To select the best triggered module subset, a two-block alternating direction method of multipliers (ADMM) algorithm is proposed based on the separable structure of the dual of the approximate convex programming.
  Notably, the global convergence of the proposed two-block ADMM algorithm can be established.
  Subsequently, transmit power allocation and the corresponding passive beamformer design for the max-min SINR problem without the module size constraint are studied while simultaneously meeting STs' power budget.
  In addition, the entire algorithm (outlined in Alg. 2) exhibits low complexity, since eqs. (\ref{s:24-1}), (\ref{s:29-1}), (\ref{s:33-1}), (\ref{s:34-1}), and (\ref{s:34-2}) are derived analytically in closed-form and can be computed directly.
 }
 }

\item{\textit{Performance Comparison:}
{Numerical simulations are used to validate the analysis and assess the
performance of the two-block ADMM algorithm as a function of the system parameters.  Detailed results are provided first when the number of modules is small, which allows comparing the ADMM algorithm with the method of exhaustive search (MES) solution.
The results show that the two-block ADMM algorithm for the approximate convex problem converges to a near optimal solution.
To gain insight into the introduced modular IRS structure, energy efficiency (EE) performance comparison is given subsequently, where the EE is defined as the ratio of achievable sum rate and the overall power consumption.
The further simulation results show that there exists an optimal value of triggered modules for a given network setting, which implies that the introduction of the modular IRS structure is meaningful.
}
}
\end{enumerate}

\subsection{Paper Outline and Notation}
The reminder of this paper is organized as follows. The system model, triggered module subset identification, the convex relaxation for the original problem, and the approximate optimization problem formulation are presented in Section II.
Section III presents the proposed two-block ADMM algorithm.
Section IV reports numerical results that are used to assess the performance of the proposed algorithm and further demonstrate the meaningfulness of the modular IRS structure.
Conclusions are presented in Section V.

Matrices and vectors are denoted by bold letters. ${\mathbf I}_N$, ${\mathbf 0}_N,$ and ${\mathbf e}_n$ are the $N\times N$ identity matrix, the $N\times 1$ all-zero column vector, and the $N\times 1$  elementary vector with a one at the $n \text{th}$ position, respectively.
${\mathbf A}^{T}, {\mathbf A}^{\dag},$ ${\mathbf A}^{-1}$, and $||{\mathbf A}||_F$ denote transpose, Hermitian, inverse, and Frobenius norm of matrix ${\mathbf A},$ respectively.
{Notation $\text{bldg}\{{\mathbf A}^1, \ldots, {\mathbf A}^M\}$ denotes a block diagonal matrix with ${\mathbf A}^m$ being the $m\text{-th}$ diagonal block,  and $[{\mathbf A}]_{i,j}$ denotes the $(i,j)\text{th}$ entry of ${\mathbf A}.$
$[{\mathbf A}, {\mathbf a}]$ and $[{\mathbf a}, a]$ denote the splicing of matrix and vector as well as vector and scalar, respectively.}
$\text{Re}(\cdot),$ $\text{Im}(\cdot),$ and  $|\cdot|$  denote the real part, imaginary part, and modulus of the enclosed vector, respectively.


\begin{figure}[!t]
\centering
\begin{minipage}[t]{1\linewidth}
\centering
\includegraphics[width=1\linewidth]{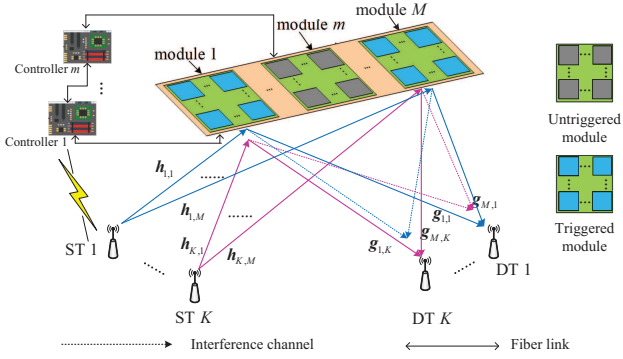}
\end{minipage}
\caption{An IRS-aided P2P network with $K$ ST-DT pairs and an IRS, which has $M$  modules.     }
\label{fig:1}
\end{figure}
\section{System Model and Problem Formulation }

\subsection{System Model}
As shown in Fig. \ref{fig:1}, we consider a two-hop link of slowly-varying P2P network where an IRS is adjoined to $K$ user pairs,  where a user pair includes one ST and one DT, and each user terminal is equipped with a single antenna.
Let ${\cal S}:=\{s_1, s_2, \ldots, s_K\}$ and ${\cal D}:=\{d_1, d_2, \ldots, d_K\}$ be the sets of STs and DTs, respectively.
The index set of user pairs is denoted by ${\cal K}:=\{1, 2, \ldots, K\}.$
{The modular IRS structure as shown in Fig. \ref{fig:1} is composed of multiple modules, each of which is attached with a smart controller, where all the controllers can communicate with each other in a point-to-point fashion via fiber links.
This setting can be regarded as a generalization of the IRS architecture introduced in \cite{Wu2019Towards}.
We suppose that the entire IRS consists of $M$ independent and controllable modules, where each module has $L$ reflecting elements, and thus, the number of the total reflecting elements at IRS is $N=ML.$
}
Let ${\cal L}:=\{1, 2, \ldots, L\}$ be the index set of reflecting elements of each module.
Define ${\cal M}:=\{1, 2, \ldots, M\}$ as the set of modules at IRS.
In the following, we use $m$ to denote the index of a module or controller within ${\cal M}$ (i.e., $m\in{\cal M}$).
The channels of two-hop communications are assumed to experience quasi-static block fading, i.e., the channel coefficients from the STs to the IRS and the IRS to the DTs remain constant during each time slot, but may vary from one to another \cite{Feng2013Device}.
Let ${\mathbf h}_{k,m}\in{\mathbb C}^{L\times 1 }$ and ${\mathbf g}_{m,k}\in{\mathbb C}^{L\times 1}$ denote the uplink channel vector from ST $k$ to the $m\text{th}$ module of IRS
and the downlink channel vector from  module $m$ to DT $k$, respectively.
The associated passive beamfromer at the $m\text{th}$ module at IRS denoted by ${\pmb\Phi}^m=\text{diag}[\phi_{(m-1)L+1}, \ldots, \phi_{(m-1)L+l}, \ldots, \phi_{mL}]\in{\mathbb C}^{L\times L},$ where $\phi_{(m-1)L+l}$ is the $l\text{-th}$ entry of  reflecting coefficient matrix at module $m$, $\forall m\in{\cal M}, l\in{\cal L}.$
We assume that all the modules can potentially serve the user pairs transmitting.
Note that if all modules are triggered to serve the ST-DT communications, the problem becomes a special case which is simpler to solve.
The passive beamformer at IRS denoted by $\pmb\Phi\in{\mathbb C}^{ N\times N},$ and the associated channel from ST $k$ to the IRS and the  channel from the IRS to DT $k$, denoted by ${\mathbf h}_k\in{\mathbb C}^{N\times 1}$ and ${\mathbf g}_k\in{\mathbb C}^{N\times 1},$ respectively, are expressed as follows:
\begin{subequations}\label{s:1}
\begin{align}
{\pmb\Phi}&:=\text{bldg}\{{\pmb\Phi}^1, {\pmb\Phi}^2, \ldots, {\pmb\Phi}^M\},\\
{\mathbf h}_k&:=[ ({\mathbf h}_{k,1})^T, ({\mathbf h}_{k,2})^T, \ldots, ({\mathbf h}_{k,M})^T ]^T, \forall k\in{\cal K},\\
{\mathbf g}_k&:=[({\mathbf g}_{1, k})^T, ({\mathbf g}_{2,k})^T, \ldots, ({\mathbf g}_{M,k})^T  ]^T, \forall k\in{\cal K}.
\end{align}
\end{subequations}
{In addition, in the following analysis--for the sake of simplicity--we consider that the reflecting coefficient $\phi_n=\beta_ne^{j\theta_n}$ with continuous phase shift and continuous amplitude attenuation as ${\cal X}\triangleq\{\phi_{(m-1)L+l}: |\phi_{(m-1)L+l}|\leq 1, \forall m\in{\cal M}, l\in{\cal L}\}.$}
In practice, the feasible set of reflecting coefficient ${\cal X}$ might be more complicated, such as in \cite{Di2019Hybrid,Huang2019Reconfigurable, Guo2019Weighted}, but this feature is beyond the focus of this paper.
The main purpose of this paper is to realize the \textit{true} RRM by introducing the modular IRS structure.
Specifically, for the formulated max-min SINR problem under module size constraint, the \textit{true} RRM can be implemented via the triggered module subset identification.

Let $z_k$ denote the data symbol of ST $k$ and write $p_k$ for its corresponding power. The signal received at DT $k$ via IRS-aided link is expressed by
\begin{equation}\label{s:2}
\begin{aligned}
y_{k}&={\mathbf g}_{k}^{\dag}{\pmb\Phi}\sum_{k=1}^K\sqrt{p_{k}}{\mathbf h}_{k}z_k+u_k\\
&={\mathbf g}_{k}^{\dag}{\pmb\Phi}\sqrt{p_k}{\mathbf h}_{k}z_k+{\mathbf g}_{k}^{\dag}{\pmb\Phi}\sum_{j=1, j\neq k}^K\sqrt{p_{j}}{\mathbf h}_{j}z_j+u_k, \forall k\in {\cal K},
\end{aligned}
\end{equation}
where $u_k\sim {\cal C}{\cal N}(0, \sigma^2)$ is the thermal noise experienced by DT $k$ and the second term accounts for the interference experienced by ST-DT pair $k$ from other user pairs $j\in{\cal K}, j\neq k.$
Then, based on (\ref{s:2}) the received SINR at  $d_k$ is given by
\begin{equation}\label{eq:2}
\text{SINR}_k=\frac{p_k|{\mathbf g}_k^{\dag}{\pmb\Phi}{\mathbf h}_k|^2}
{\sum_{j=1,j\neq k}^Kp_j|{\mathbf g}_k^{\dag}{\pmb\Phi}{\mathbf h}_j|^2+\sigma^2}, \forall k\in{\cal K}.
\end{equation}
The design problem is to maximize the minimum SINR at all DTs, while satisfying the STs' transmit power constraint $p_k^{\max}, \forall k\in{\cal K},$ and the reflecting coefficient constraint ${\cal X};$ that is
\begin{align}
&\max_{\left\{{\pmb\Phi}, \{ p_k\}_{k=1}^K\right\}} \min_{k}~ \text{SINR}_k \label{add:3-1}\\
&\text{s.t.~} p_k\leq p_k^{\max}, \text{~and~} {\cal X},\forall k\in {\cal K}.\label{add:3-2}
\end{align}
The optimization problem (\ref{add:3-1})-(\ref{add:3-2}) is  non-convex, due to the non-convex objective function w.r.t. $\{p_k\}_{k=1}^K$ and ${\pmb\Phi}$, and there is no known efficient and standard method to obtain an optimal solution for it.
This motivates the pursuit of effective method for a suboptimal solution via exploiting the special structure of problem itself.
The most suitable tool for this similar problems is the generalized fractional programs \cite{Crouzeix1985An, Borde1987Convergence, Benadada1988Partial}, since the numerator and denominator of SINR are continuous functions on variables.
For this IRS-aided two-hop communication, the suboptimal solution of (\ref{add:3-1})-(\ref{add:3-2}) can be obtained effectively by using {generalized fractional programs} and the alternating optimization technique \cite{Fukushima1992Application} to separately and iteratively solve for $\{p_k\}_{k=1}^K$ and ${\pmb\Phi}$ \cite{Wu2019Beamforming,Huang2019Reconfigurable, Di2019Hybrid, Wu2019Towards}.

\subsection{Triggered  Module  Subset Identification}

In this section, suppose now that only $Q\leq M$  modules are available, and thus only $QL$ reflecting elements can be serving the user pairs simultaneously.
From \cite{Mehanna2013Joint}, the design problem is to jointly select the best $Q$ out of $M$ modules, and design the transmit power $\{p_k\}_{k=1}^K$ and the corresponding passive beamformer so that the minimum SINR among DTs is maximized, subject to the power budget at each ST and reflecting coefficient constraint.

Define the $N\times 1$ vector ${\pmb\phi}:=[({\pmb\phi}^1)^T, ({\pmb\phi}^2)^T, \ldots, ({\pmb\phi}^M)^T]^T$, where ${\pmb\phi}^m:=[\phi_{(m-1)L+1}, \ldots, {\phi}_{mL}]^T\in{\mathbb C}^{L\times 1}$ is the $m\text{th}$ block of vector ${\pmb\phi}$, $ \forall m\in{\cal M}.$
If module $m\in{\cal M}$ is not triggered, vector ${\pmb\phi}^m$ must be set to zero, i.e., ${\pmb\phi}^m={\mathbf 0}.$
Hence, the max-min SINR problem via joint triggered modules identification, power allocation, and passive beamformer design can be expressed by
\begin{align}
\text{(P0)} &\max_{\left\{{\pmb\Phi}, \{ p_k\}_{k=1}^K\right\}} \min_{k}~ \text{SINR}_k \label{eq:3-1}\\
\text{s.t.~} &{{\cal S}(\text{module})\leq Q, } \text{~and~} (\ref{add:3-2}),\label{eqq:3-3}
\end{align}
{where ${\cal S}(\text{module})$ denotes the number of triggered modules,  and  $Q\leq M$ is the upper bound of this number.}
Note that (P0) is an NP-hard problem due to the non-convex constraint ${\cal S}(\text{module})\leq Q$, and solving (P0) requires an exhaustive combinatorial search over all $\binom{Q}{M} $ possible patterns.
Thus, in the following, we aim to develop computationally efficient method to obtain a sub-optimal solution.
Specifically, instead of the hard module size constraint ${\cal S}(\text{module})\leq Q$, a sparsity-inducing approximation \cite{Mehanna2013Joint, Candes2008Enhancing} can be employed to obtain a convex relaxation of (P0).


\subsection{Convex Relaxation and Problem Formulation}

Define ${\mathbf A}_{k}=\text{diag}[{\mathbf g}_k^{\dag}]\in {\mathbb C}^{N\times N},$
the $N\times 1$ vector $\bar{\mathbf h}_{j, k}={\mathbf A}_{k}{\mathbf h}_{j},$
and the $(KN)\times 1$ vector $\bar{\mathbf h}^k=[(\bar{\mathbf h}_{1,k})^T, (\bar{\mathbf h}_{2, k})^T, \ldots,(\bar{\mathbf h}_{k,k})^T, \ldots, (\bar{\mathbf h}_{K,k})^T]^T.$
Thus, the expression of $\text{SINR}_k$ in (\ref{eq:2}) can be rewritten as
\begin{equation}\label{s:13}
\begin{aligned}
\text{SINR}_{k}&=\frac{p_{k}{\pmb\phi}^{\dag}\bar{\mathbf h}_{k,k}\bar{\mathbf h}_{k,k}^{\dag}{\pmb\phi}}
{\sigma^2+\sum_{j=1, j\neq k}^Kp_{j}{\pmb\phi}^{\dag}\bar{\mathbf h}_{j,k}\bar{\mathbf h}_{j,k}^{\dag}{\pmb\phi}}\\
&=\frac{\bar{\pmb\phi}_k^{\dag}\bar{\mathbf h}_{k,k}\bar{\mathbf h}_{k,k}^{\dag}\bar{\pmb\phi}_k}
{\sigma^2+\sum_{j=1, j\neq k}^K\bar{\pmb\phi}_j^{\dag}\bar{\mathbf h}_{j,k}\bar{\mathbf h}_{j,k}^{\dag}\bar{\pmb\phi}_j},
\end{aligned}
\end{equation}
where $\bar{\pmb\phi}_k=\sqrt{p_k}{\pmb\phi}, \forall k\in{\cal K}. $
Define the $N\times K$ matrix $\bar{\pmb\Phi}=[\bar{\pmb\phi}_1, \bar{\pmb\phi}_2, \ldots, \bar{\pmb\phi}_K].$
{Moreover, define the $L\times K$ matrix ${\bar{\pmb\Phi}^m}:=[\sqrt{p_1}{\pmb\phi}^m, \sqrt{p_2}{\pmb\phi}^m, \ldots, \sqrt{p_K}{\pmb\phi}^m],$ where $\sqrt{p_k}{\pmb\phi}^m$ is the $m\text{-th}$ block of $\bar{\pmb\phi}_k.$
Thus, $\bar{\pmb\Phi}$ can be rewritten as $\bar{\pmb\Phi}=[(\bar{\pmb\Phi}^1)^T, (\bar{\pmb\Phi}^2)^T, \ldots, (\bar{\pmb\Phi}^M)^T]^T.$
Define  the $M\times 1$ vector $\tilde{\pmb\phi}:=[||\bar{\pmb\Phi}^1||_F, ||\bar{\pmb\Phi}^2||_F, \ldots, ||\bar{\pmb\Phi}^M||_F]^T.$
If module $m$ is not be triggered, all the entries of matrix $\bar{\pmb\Phi}^m$ must be set to zero.
This means that $\bar{\pmb\Phi}^m={\mathbf 0}$, i.e.,  the $m\text{-th}$ block of each $\bar{\pmb\phi}_k$, for all $K$ ST-DT pairs, must be set to zero simultaneously.
Hence, the hard module size constraint in (\ref{eqq:3-3}) is equivalent to
\begin{equation}\label{add:1}
\begin{aligned}
||\tilde{\pmb\phi}||_{0}\leq Q,
\end{aligned}
\end{equation}
where the $\ell_0\text{-norm}$ is the number of nonzero entries of $\tilde{\pmb\phi},$ i.e.,
$||\tilde{\pmb\phi}||_0:=\left|\left\{m: ||\bar{\pmb\Phi}^m||_F\neq 0\right\}  \right|.$
By introducing an auxiliary variable $\gamma$ and replacing the hard module size constraint, an $\ell_0\text{-norm}$ penalty can be employed to promote sparsity leading to
\begin{equation}\label{add:2}
\begin{aligned}
\max_{{\pmb\phi}, \{p_k\}_{k=1}^K, \gamma}&~~~\gamma-\alpha||\tilde{\pmb\phi}||_0\\
\text{s.t.}~&~\text{SINR}_k\geq \gamma, \text{~and~} (\ref{add:3-2}), k\in{\cal K},
\end{aligned}
\end{equation}
where $\alpha>0$ is a positive real tuning parameter that controls the sparsity of the solution, i.e., the number of triggered modules. Problem (\ref{add:2}) strikes a balance between maximizing the minimum SINR among ST-DT pairs and minimizing the number of triggered modules, where a larger $\alpha$ implies a sparser solution. Note that for any $\alpha,$ there is a corresponding $Q$ for which problems (\ref{add:2}) and (P0) yield the same sparse solution, and thus the focus is placed on (\ref{add:2}) only.
However, since the optimization problem (\ref{add:2}) is nonconvex and generally impossible to solve as its solution usually requires an intractable combinatorial search.
}
The mixed ${\ell}_{1, 2}\text{-norm}$, which was first presented in the context of the group \textit{least-absolute selection and shrinkage operator} (group \textit{Lasso}) \cite{Yuan2006Model}, the triggered module size can be effectively approximated by replacing the ${\ell}_{0}\text{-norm}$ with ${\ell}_{1, 2}\text{-norm}$, i.e.,
$||{\pmb\phi}||_{1,2}\triangleq \sum_{m=1}^M ||{\pmb\phi}^m ||_2.$
In our scenario,  the mixed convex norm $\ell_{1,F}$ of matrix \cite{Lin2016Joint} can be defined as
\begin{equation}\label{eq:4}
||\bar{\pmb\Phi}||_{1,F}=\sum_{m=1}^M||\bar{\pmb\Phi}^m||_F.
\end{equation}
Note that $||\bar{\pmb\Phi}||_{1, F}=||\tilde{\pmb\phi}||_1.$ The mixed $\ell_{1, F}\text{-norm}$ behaves as the $\ell_1\text{-norm}$ on $\tilde{\pmb\phi},$ which implies that each $||\bar{\pmb\Phi}^m||_F$ (or equivalently $\bar{\pmb\Phi}^m$) is encouraged to be set to zero, therefore inducing group-sparsity.
{From \cite{Candes2008Enhancing}, the $\ell_1\text{-norm}$ is known to offer the closest convex approximation to the $\ell_0\text{-norm},$ and the sparsity-promoting nature of $\ell_1\text{-norm}$ minimization was empirically confirmed \cite{Taylor1979Deconvolution,Santosa1986Linear},
where the objective function of (\ref{add:2}) can be replaced by $\gamma-\alpha||\bar{\pmb\Phi}||_{1, F}.$
Naturally, the hard module size constraint in program (P0) can be relaxed as
\begin{equation}\label{add:3}
\alpha\sum_{m=1}^M||\bar{\pmb\Phi}^m||_F\leq \delta,
\end{equation}
where $\delta>0$ controls the row block sparsity of $\bar{\pmb\Phi}.$}
{Applying the convex relaxation of module size constraint, (P0) is equivalent to
\begin{align}
\text{(P1)}~\max_{\bar{\pmb\phi}_k}&\min_{k} \text{SINR}_k\label{add:4-1}\\
\text{s.t.}~&(\ref{add:3-2}) \text{~and~}  (\ref{add:3}).\label{add:4-3}
\end{align}
}

{
{\em{Remark 1:}} Similar to the weighted $\ell_1$ minimization problem \cite{Candes2008Enhancing}, the weight $\alpha$ can be regarded as a free parameter in the convex relaxation, whose value can be chosen to avoid trivial solutions (i.e, $\bar{\pmb\Phi}^m={\mathbf 0}$ or $\bar{\pmb\Phi}^m\neq {\mathbf 0}, \forall m\in{\cal M}$).
Notably, in (\ref{add:3}),  both $\alpha$ and $\delta$ affect the cardinal number of the triggered module subsets together.
As mentioned before, the sparsity is negatively correlated with $\alpha$, while is positively correlated with  $\delta.$
Inspired by the iterative algorithm of redefining the weights \cite{Candes2008Enhancing}, intuitively, parameter $\alpha$ should relate inversely to $\delta,$ consequently,  $\alpha$ can be  set to $1/(\delta+0.01)$.
Our motivation for introducing $0.01$ in the $\alpha$ setting is to provide stability and ensure feasibility.
In this setting, the number of triggered modules increases as the parameter $\delta$ increases until approaching the upper bound of the module quantity.
}
{
\begin{lemma}
If $\delta\geq \frac{-0.01+\sqrt{(0.01)^2+\sqrt{16MKN\max_{k}\{p_k^{\max}\}}}}
{2},$  the triggered module size is $M$, since the module size constraint in (\ref{add:3}) is inactive.
\end{lemma}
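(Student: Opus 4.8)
The plan is to show that under the parameter setting $\alpha=1/(\delta+0.01)$ stated in Remark 1, the largest value that $\sum_{m=1}^M||\bar{\pmb\Phi}^m||_F$ can attain over the feasible set defined by (\ref{add:3-2}) is dominated by $\delta(\delta+0.01)$ once $\delta$ exceeds the stated threshold; consequently the module-size constraint (\ref{add:3}) holds at every feasible point and may be discarded, so that (P1) reduces to the problem without (\ref{add:3}) and all $M$ modules are triggered. First I would substitute $\alpha=1/(\delta+0.01)$ into (\ref{add:3}) to rewrite the relaxed module-size constraint as $\sum_{m=1}^M||\bar{\pmb\Phi}^m||_F\leq \delta(\delta+0.01)=\delta^2+0.01\delta$. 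It therefore suffices to produce a uniform upper bound for the left-hand side over all $({\pmb\phi},\{p_k\})$ satisfying (\ref{add:3-2}).

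Next I would compute the block norm explicitly. Using $\bar{\pmb\Phi}^m=[\sqrt{p_1}{\pmb\phi}^m,\ldots,\sqrt{p_K}{\pmb\phi}^m]$, one gets $||\bar{\pmb\Phi}^m||_F^2=\sum_{k=1}^K p_k\,||{\pmb\phi}^m||_2^2=||{\pmb\phi}^m||_2^2\sum_{k=1}^K p_k$, so that $||\bar{\pmb\Phi}^m||_F=||{\pmb\phi}^m||_2\,\bigl(\sum_{k=1}^K p_k\bigr)^{1/2}$ and hence $\sum_{m=1}^M||\bar{\pmb\Phi}^m||_F=\bigl(\sum_{m=1}^M||{\pmb\phi}^m||_2\bigr)\bigl(\sum_{k=1}^K p_k\bigr)^{1/2}$. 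Invoking the reflecting-coefficient constraint ${\cal X}$, each $|\phi_n|\leq 1$ gives $||{\pmb\phi}^m||_2\leq\sqrt{L}$ and thus $\sum_{m=1}^M||{\pmb\phi}^m||_2\leq M\sqrt{L}$; the power budget gives $\sum_{k=1}^K p_k\leq K\max_k\{p_k^{\max}\}$. Combining these two estimates and using $N=ML$ yields the worst-case bound $\sum_{m=1}^M||\bar{\pmb\Phi}^m||_F\leq M\sqrt{L}\cdot\sqrt{K\max_k\{p_k^{\max}\}}=\sqrt{MKN\max_k\{p_k^{\max}\}}$.

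Finally I would impose $\delta^2+0.01\delta\geq\sqrt{MKN\max_k\{p_k^{\max}\}}$, i.e. $\delta^2+0.01\delta-\sqrt{MKN\max_k\{p_k^{\max}\}}\geq 0$, and read off the positive root of this quadratic, namely $\delta\geq\frac{-0.01+\sqrt{(0.01)^2+4\sqrt{MKN\max_k\{p_k^{\max}\}}}}{2}$. Rewriting $4\sqrt{X}=\sqrt{16X}$ with $X=MKN\max_k\{p_k^{\max}\}$ reproduces precisely the threshold in the statement, so for all $\delta$ at least this large (\ref{add:3}) is satisfied by every feasible point and is therefore inactive; with the group-sparsity penalty thus removed there is no incentive to null any block $\bar{\pmb\Phi}^m$, and all $M$ modules are triggered. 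I do not expect a genuine obstacle here: the only point requiring care is matching the numerical constant in the quadratic formula to the exact form of the threshold, while the two norm estimates are elementary and tight in the worst case.
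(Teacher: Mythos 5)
Your proof is correct and follows essentially the same route as the paper's Appendix A: both establish the uniform worst-case bound $\sum_{m=1}^M\|\bar{\pmb\Phi}^m\|_F\le\sqrt{MKN\max_k\{p_k^{\max}\}}$ over the feasible set defined by (\ref{add:3-2}) and then obtain the stated threshold as the positive root of $\delta^2+0.01\delta-\sqrt{MKN\max_k\{p_k^{\max}\}}=0$. The only difference is cosmetic: the paper derives the bound via the inequality $\left(\sum_{m=1}^M\|\bar{\pmb\Phi}^m\|_F\right)^2\le M\sum_{m=1}^M\|\bar{\pmb\Phi}^m\|_F^2=M\|\bar{\pmb\Phi}\|_F^2$, whereas you factor the sum exactly as $\left(\sum_{m=1}^M\|{\pmb\phi}^m\|_2\right)\left(\sum_{k=1}^K p_k\right)^{1/2}$ and bound each factor separately; both estimates are tight in the same worst case and yield the identical constant.
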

\begin{proof}
Please to refer to Appendix A.
\end{proof}
}

{As Lemma 1 states, there is a rang of favorable parameter $\delta$ for a given system setting (i.e., $M, K, N,$ and $\{p_k^{\max}\}_{k=1}^K$), which suggests that the possibility of constructing a favorable set of $\delta$ based solely on the information about the system magnitudes.
To efficiently control the triggered module size, $\delta$ should be within the range of $\left(0, \frac{-0.01+\sqrt{(0.01)^2+\sqrt{16MKN\max_{k}\{p_k^{\max}\}}}}
{2}\right).$}

By introducing an auxiliary variable $\gamma,$ the joint triggered module subset identification, transmit power allocation, and the corresponding passive beamformer design problem (P1) can thus be equivalent to
\begin{align}
\text{(P1--1)}~&\max_{{\pmb\phi}, \gamma, \{p_k\}_{k=1}^K}~\gamma  \label{eq:8}\\
\text{s.t.~}&~\text{SINR}_k\geq \gamma, \forall k\in{\cal K}\label{eq:9}\\
&(\ref{add:3-2}) \text{~and~} (\ref{add:3}). \label{eq:11}
\end{align}
It is clear from problem (P1--1) that for large $\gamma,$  (P1--1) may be infeasible due to the resulting stringent SINR constraints, strong interference, and insufficient number of triggered modules.
To this end, in the following, problem (\ref{eq:8})--(\ref{eq:11}) can be solved efficiently via bisection method for feasibility checking.
Although the feasibility checking of (P1-1) can be solved by CVX \cite{Grant2011The}, the problem dimension may be transformed to an additional challenging issue, due to the increasing collected information of all the ST-DT pairs and the IRS, when the number of them is larger.
In the next section, to develop a partially distributed algorithm, we fit (\ref{eq:8})--(\ref{eq:11}) into the ADMM framework \cite{Fukushima1992Application} and then reformulate it as a separable group \textit{Lasso} problem \cite{Yuan2006Model}.
Finally, a custom-made partially distributed algorithm is developed.
The proposed algorithm is computationally efficient since each step of ADMM can be computed in closed-form.

\section{Two-block ADMM-based Optimal Solution }
Our proposed ADMM-based solution framework is composed of two phases. The first phase is to identify the triggered module subset and the second phase is to solve the original max-min SINR problem free from the module size constraint (\ref{add:3}).
\subsection{Triggered Module  Subset Identification}
As demonstrated in Section II,  for a given $\gamma>0,$ the design problem (\ref{eq:8})--(\ref{eq:11}) becomes the feasibility test one.
In this context, the challenge in solving problem (P1--1) lies in the fact that its objective is non-differentiable and that the feasible set is nonconvex.
To proceed further, for the fixed $\gamma$, we observe that (P1--1) is feasible if and only if the solution of the following optimization problem (P1--2) is lower than $\delta,$ where (P1--2) is given by
\begin{align}
\text{(P1--2)}~ &\min_{\bar{\pmb\Phi}} ~ \sum_{m=1}^M\alpha||\bar{\pmb\Phi}^m||_F\label{eq:12}\\
\text{s.t.~}&\sqrt{(1+\gamma^{-1})}\bar{\mathbf h}_{k,k}^{\dag}\bar{\pmb\phi}_k\geq ||[\bar{\mathbf h}^{k\dag}\widetilde{\pmb\Phi}, \sigma] ||_2, \forall k\in{\cal K}\label{eq:13}\\
&\bar{\pmb\phi}_k^{\dag}{\mathbf e}_n{\mathbf e}_n^{\dag}\bar{\pmb\phi}_k\leq p_k^{\max}, n\in{\cal N}; k\in{\cal K},\label{eq:14}\\
&\text{Im}(\bar{\mathbf h}_{k,k}^{\dag}\bar{\pmb\phi}_k)=0, \forall k\in{\cal K}. \label{eq:14-1}
\end{align}
The constraint (\ref{eq:13}) of (P1--2) is the reformulation of SINR constraint (\ref{eq:9}) relies on the second-order cone program \cite{Boyd2009Convex}, where the $(NK)\times K$ matrix $\tilde{\pmb\Phi}$ is defined as $\tilde{\pmb\Phi}=\text{bldg}\{\bar{\pmb\phi}_1, \bar{\pmb\phi}_2, \ldots, \bar{\pmb\phi}_K\}.$

{Define $\widetilde{\mathbf H}=[\bar{\mathbf h}^1, \bar{\mathbf h}^2, \ldots, \bar{\mathbf h}^K]\in{\mathbb C}^{(NK)\times K}.$
To develop the performance gains brought by the passive beamforming of IRS, in the rest of this paper, it is assumed that the channels ${\mathbf h}_k$ and ${\mathbf g}_k$ are perfectly known at $s_k, \forall k\in{\cal K}$ \cite{Huang2019Reconfigurable}.
All involved channels can be estimated at IRS via their training signals, whose implementation is based on the assumption that each reflecting element is equipped with a low-power receive RF chain \cite{Wu2019Towards}. 
}
Moreover, let us introduce a $K\times (K+1)$ auxiliary matrix
${\mathbf F}=[\widetilde{\mathbf H}^{ \dag}\widetilde{\pmb\Phi}, \sigma{\mathbf 1}_{K}].$
Furthermore, define $f_{k,k}=\bar{\mathbf h}_{k,k}^{\dag}\bar{\pmb\phi}_k$ and ${\mathbf f}_k=[\bar{\mathbf h}^{k\dag}\widetilde{\pmb\Phi}, \sigma]\in{\mathbb C}^{1\times (K+1)}$ as the $k\text{-th}$ diagonal element and the $k\text{-th}$ row vector of ${\mathbf F},$ respectively. Using these definitions and introducing a matrix variable ${\mathbf W}=\bar{\pmb\Phi}\in {\mathbb C}^{N\times K}$, problem (\ref{eq:12})--(\ref{eq:14-1}) can be reformulated as the following problem:
\begin{align}
(\text{P1--3})~ &\min_{\left\{\{\bar{\pmb\phi}_k\}_{k\in{\cal K}}, {\mathbf W}, \{{\mathbf F}\}\right\}} \sum_{m=1}^{M} \alpha||{\mathbf W}^m||_F\label{eq:15}\\
\text{s. t.}& \sqrt{(1+\gamma^{-1})}f_{k,k}\geq ||{\mathbf f}_{k}||_2, \forall k\in {\cal K}\label{eq:16}\\
&\bar{\pmb\phi}_k^{\dag}{\mathbf e}_n{\mathbf e}_{n}^{\dag}\bar{\pmb\phi}_k\leq p_k^{\max}, \forall n\in{\cal N}; k\in{\cal K},\label{eq:17}\\
&{\mathbf W}=\bar{\pmb\Phi},~~~
{\mathbf F}=[\widetilde{\mathbf H}^{ \dag}\widetilde{\pmb\Phi}, \sigma{\mathbf 1}_{K}]\label{eq:18}.
\end{align}
(P1--3) is a convex minimization problem, and therefore, the duality gap between (P1--3) and its augmented duality problem is zero.
This means that the optimal solution of (P1--3) can be obtained by applying the augmented Lagrangian duality theory \cite{Hestenes1969Multiplier}.
The partial augmented Lagrangian function of (P1--3) can be written as
\begin{equation}\label{s:21}
\begin{aligned}
{ L_c}&(\{\bar{\pmb\phi}_k\}_{k\in{\cal K}}, {\mathbf W}, {\mathbf F}, {\pmb\Lambda}, {\pmb\Psi})\\
\triangleq&  \sum_{m=1}^{{M}}\alpha||{\mathbf W}^m||_F+
\text{Re}\{\text{Tr}[{\pmb\Lambda}^{\dag}({\mathbf W}-\bar{\pmb\Phi})]\}
+\frac{c}{2}||{\mathbf W}-\bar{\pmb\Phi}||_F^2\\
&+\text{Re}\{\text{Tr}[ {\pmb\Psi}^{\dag}({\mathbf F}-[\widetilde{\mathbf H}^{ \dag}\widetilde{\pmb\Phi}, \sigma{\mathbf 1}_{K}])]\}
+\frac{c}{2}||{\mathbf F}-[\widetilde{\mathbf H}^{\dag}\widetilde{\pmb\Phi}, \sigma{\mathbf 1}_{K}]  ||_F^2,
\end{aligned}
\end{equation}
where $c>0$ is the penalty factor; ${\pmb\Lambda}\in {\mathbb C}^{N\times K}$ and ${\pmb\Psi}\in{\mathbb C}^{K\times (K+1)}$ are the Lagrangian matrix multipliers for ${\mathbf W}=\bar{\pmb\Phi}$ and ${\mathbf F}=[\widetilde{\mathbf H}^{\dag}\widetilde{\pmb\Phi}, \sigma{\mathbf 1}_{K}]$, respectively.
Note that the reformulated SINR constraints  (\ref{eq:16}) as well as the boundary constraint (\ref{eq:17}) are not taken into the augmented Lagrangian function and they will be integrated into the optimal solution in the following.
Particularly, we focus on solving:
\begin{equation}\label{s:20}
\begin{aligned}
(\text{P1--4})~&\max_{\left\{{\pmb\Lambda},{\pmb\Psi}\right\}}\min_{\left\{\{\bar{\pmb\phi}_k\}_{k\in{\cal K}}, {\mathbf W}, {\mathbf F}\right\}}
{ L_c}\left(\{\bar{\pmb\phi}_k\}_{k\in{\cal K}}, {\mathbf W}, {\mathbf F}, {\pmb\Lambda}, {\pmb\Psi}\right)\\
\text{s. t. ~}& \sqrt{(1+\gamma^{-1})}f_{k,k}\geq ||{\mathbf f}_{k}||_2, \forall k\in {\cal K}\\
&\bar{\pmb\phi}_k^{\dag}{\mathbf e}_n{\mathbf e}_n^{\dag}\bar{\pmb\phi}_k\leq p_k^{\max}, \forall n\in{\cal N}; k\in{\cal K}.
\end{aligned}
\end{equation}

We need to decouple the optimization variables in $L_c$ to make (P1--4) intractable.
To be specific, dividing $\{\bar{\pmb\phi}_k\}_{k\in{\cal K}}, {\mathbf W}, $ and ${\mathbf F}$ into two blocks of $\{\bar{\pmb\phi}_k\}_{k\in{\cal K}}$ and $\{{\mathbf W}, {\mathbf F}\}$, we can apply the two-block ADMM framework \cite{Fukushima1992Application} to solve (P1--4).
In each iteration $t$, we first update $\{\bar{\pmb\phi}_k\}_{k\in{\cal K}}$ by solving ${ P}_{\bar{\pmb\phi}_k}$
\begin{equation}
\begin{aligned}
{ P}_{\bar{\pmb\phi}_k}: \min_{\bar{\pmb\phi}_k} &\text{Re}\{\text{Tr}[{\pmb\lambda}^{k\dag}({\mathbf w}^k(t)-\bar{\pmb\phi}_k) ]\}+\frac{c}{2}||{\mathbf w}^k(t)-\bar{\pmb\phi}_k ||_2^2\\
&+\text{Re}\{\text{Tr}[{\pmb\psi}^{k \dag}({\mathbf f}^{k}(t)-\bar{\mathbf H}^{k \dag}\bar{\pmb\phi}_k)]\}\\
&+\frac{c}{2}\|{\mathbf f}^{k}(t)-\bar{\mathbf H}^{k \dag}\bar{\pmb\phi}_k \|_2^2\\
\text{s.t.~}&~\bar{\pmb\phi}_k^{\dag}{\mathbf e}_n{\mathbf e}_n^{\dag}\bar{\pmb\phi}_k\leq p_k^{\max}, \forall k\in{\cal K}; n\in{\cal N},
\end{aligned}
\end{equation}
where ${\mathbf w}^k\in {\mathbb C}^{N\times 1}$ and ${\pmb\lambda}^k\in {\mathbb C}^{N\times 1}$ represent the $k\text{-th}$ column of matrices ${\mathbf W}$ and ${\pmb\Lambda}$, respectively;
and ${\pmb\psi}^{k}\in{\mathbb C}^{K\times 1}$ and ${\mathbf f}^{k}\in {\mathbb C}^{K\times 1} $
are the $k\text{-th}$ column of matrices ${\pmb\Psi}$ and ${\mathbf F}$, respectively;
$\bar{\mathbf H}^{k}=[\bar{\mathbf h}_{k,1}, \bar{\mathbf h}_{k,2}, \ldots,\bar{\mathbf h}_{k,K}]\in{\cal C}^{N\times K}.$
With the obtained $\bar{\pmb\Phi}$ (or $\{\bar{\pmb\phi}_k\}_{k\in{\cal K}}$), and then better solutions for ${\mathbf W}$ and ${\mathbf F}$ can be updated by solving the following problem:
\begin{equation}\label{eq:18}
\begin{aligned}
{ P}_{{\mathbf W}, {\mathbf F}}:&~\min_{{\mathbf W}, {\mathbf F}} L_c(\bar{\pmb\Phi}(t+1), {\mathbf W}, {\mathbf F}, {\pmb\Lambda},{\pmb\Psi})\\
\text{s.t.}& \sqrt{(1+\gamma^{-1})}f_{k,k}\geq ||{\mathbf f}_{k}||_2, \forall k\in{\cal K}.
\end{aligned}
\end{equation}
Then, as shown in Appendix B, the optimal ${\bar{\pmb\phi}_k}$ and ${\mathbf W}$ can be obtained as in Theorem 1.
\begin{theorem}
For given $\pmb\Lambda $ and  $\pmb\Psi,$ the optimal $\{\bar{\pmb\phi}_k\}_{k=1}^K$ of minimizing ${ P}_{\bar{\pmb\phi}_k}$ is given by
\begin{equation}\label{s:24-1}
\begin{aligned}
\bar{\pmb\phi}_k(t+1)=&( c{\mathbf I}_{N\times N}+c\tilde{\mathbf h}^k\bar{\mathbf H}^{k\dag}+2\sum_{n=1}^N \mu_n^k{\mathbf e}_n{\mathbf e}_n^{\dag})^{-1}\\
&\times({\pmb\lambda}^k(t)+c{\mathbf w}^k(t)+\tilde{\mathbf h}^k{\pmb\psi}^{k}(t)
+c\bar{\mathbf H}^k{\mathbf f}^{k}(t)),
\end{aligned}
\end{equation}
where  $\mu_n^k\geq 0$ is the Lagrangian multiplier of boundary constraint in (\ref{eq:17}).
Moreover, the optimal ${\mathbf W}$ is given by solving the following unconstrained problem
\begin{equation}\label{s:25}
\begin{aligned}
 \min_{\mathbf W} \sum_{m=1}^{M}& \alpha||{\mathbf W}^m||_F
+\text{Re}\{\text{Tr}[{\pmb\Lambda}^{\dag}({\mathbf W}-\bar{\pmb\Phi}(t+1))] \}\\
&+\frac{c}{2}|| {\mathbf W}-\bar{\pmb\Phi}(t+1) ||_F^2.
\end{aligned}
\end{equation}
Using the first-order optimality condition for the optimal solution ${\mathbf W}^m(t+1)$, we have
\begin{equation}\label{s:29-1}
\begin{aligned}
{\mathbf W}^m(t+1)=\left\{\begin{array}{lll}
&{\mathbf 0}, &\text{~if~} ||{\pmb\Xi}(t)||_F\leq \alpha\\
&\frac{(||{\pmb\Xi}^m(t)||_F-\alpha){\pmb\Xi}^m(t)}
{c||{\pmb\Xi}^m(t)||_F}, &\text{~otherwise},
\end{array}\right.
\end{aligned}
\end{equation}
where ${\pmb\Xi}^m(t)=c\bar{\pmb\Phi}^m(t+1)-{\pmb\Lambda}^m(t)$, and ${\pmb\Lambda}^m\in {\mathbb C}^{{L}\times K},$ ${\mathbf W}^m\in {\mathbb C}^{{L}\times K},$ and $\bar{\pmb\Phi}^m\in {\mathbb C}^{{L}\times K}$
are the $m\text{-th}$ row blocks of matrices ${\pmb\Lambda}, {\mathbf W},$ and $\bar{\pmb\Phi}$, respectively,  $\forall m\in{\cal M}.$
\end{theorem}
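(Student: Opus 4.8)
The plan is to treat Theorem~1 as two decoupled convex subproblems and resolve each by its first-order optimality conditions, since the two-block splitting has already isolated the $\bar{\pmb\phi}_k$-update from the $\{{\mathbf W},{\mathbf F}\}$-update.

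For the update of $\bar{\pmb\phi}_k$, I would first observe that $P_{\bar{\pmb\phi}_k}$ is a convex quadratically constrained quadratic program: the objective is a sum of two squared norms (hence a convex quadratic in $\bar{\pmb\phi}_k$) plus linear terms, while each power constraint $\bar{\pmb\phi}_k^{\dag}{\mathbf e}_n{\mathbf e}_n^{\dag}\bar{\pmb\phi}_k\leq p_k^{\max}$ in (\ref{eq:17}) is convex. Introducing multipliers $\mu_n^k\geq 0$ for the $N$ power constraints, I would form the Lagrangian; strong duality holds because Slater's condition is met ($\bar{\pmb\phi}_k={\mathbf 0}$ is strictly feasible), so the KKT system is necessary and sufficient. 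The stationarity condition comes from setting the Wirtinger derivative with respect to $\bar{\pmb\phi}_k^{*}$ to zero. Collecting the quadratic contributions yields the Hermitian matrix $c{\mathbf I}_{N}+c\bar{\mathbf H}^{k}\bar{\mathbf H}^{k\dag}+2\sum_{n=1}^N\mu_n^k{\mathbf e}_n{\mathbf e}_n^{\dag}$, which is positive definite for $c>0$ and $\mu_n^k\geq 0$ and hence invertible; collecting the linear contributions yields the right-hand side ${\pmb\lambda}^k(t)+c{\mathbf w}^k(t)+\bar{\mathbf H}^{k}{\pmb\psi}^{k}(t)+c\bar{\mathbf H}^{k}{\mathbf f}^{k}(t)$. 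Inverting produces (\ref{s:24-1}).

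For the update of ${\mathbf W}$, the plan is to exploit separability: both $\sum_{m}\alpha\|{\mathbf W}^m\|_F$ and the quadratic penalty in (\ref{s:25}) decompose across the $M$ row blocks, so the problem splits into $M$ independent subproblems in the blocks ${\mathbf W}^m$. Completing the square in each reduces it, up to an additive constant, to the proximal problem $\min_{{\mathbf W}^m}\alpha\|{\mathbf W}^m\|_F+\frac{c}{2}\|{\mathbf W}^m-\tfrac{1}{c}{\pmb\Xi}^m\|_F^2$ with ${\pmb\Xi}^m=c\bar{\pmb\Phi}^m(t+1)-{\pmb\Lambda}^m$. Since $\|\cdot\|_F$ is non-differentiable at the origin, I would invoke subgradient optimality, $0\in\alpha\,\partial\|{\mathbf W}^m\|_F+c{\mathbf W}^m-{\pmb\Xi}^m$, and use that $\partial\|{\mathbf W}^m\|_F=\{{\mathbf W}^m/\|{\mathbf W}^m\|_F\}$ when ${\mathbf W}^m\neq{\mathbf 0}$ and equals the Frobenius unit ball when ${\mathbf W}^m={\mathbf 0}$. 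Taking Frobenius norms in the nonzero case gives $\|{\mathbf W}^m\|_F=(\|{\pmb\Xi}^m\|_F-\alpha)/c$ (feasible only when $\|{\pmb\Xi}^m\|_F>\alpha$), and substituting back recovers the shrinkage along ${\pmb\Xi}^m$; the zero case holds exactly when $\|{\pmb\Xi}^m\|_F\leq\alpha$. This block soft-thresholding rule is precisely (\ref{s:29-1}).

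I expect the $\bar{\pmb\phi}_k$-step to be the main obstacle rather than the ${\mathbf W}$-step. The ${\mathbf W}$-update is a standard group soft-thresholding identity once block separability is noticed. By contrast, the $\bar{\pmb\phi}_k$-update is over complex variables, so one must be careful with the Wirtinger convention (which is what produces the factor of $2$ in front of $\mu_n^k{\mathbf e}_n{\mathbf e}_n^{\dag}$) and must carry the $\mu_n^k$ as free parameters; a fully rigorous argument should also indicate how they are pinned down through complementary slackness, with $\mu_n^k>0$ forcing the $n$-th power budget to hold with equality and $\mu_n^k=0$ on the inactive constraints, which is where the remaining bookkeeping resides.
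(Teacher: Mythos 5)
Your proposal is correct and follows essentially the same route as the paper's Appendix B: the $\bar{\pmb\phi}_k$-update is obtained from the KKT stationarity condition of the Lagrangian with multipliers $\mu_n^k$ for the per-element power constraints (the paper's eq. (\ref{s:23})), and the ${\mathbf W}$-update is obtained by splitting (\ref{s:25}) into $M$ independent block problems and applying subgradient optimality of the Frobenius norm, yielding block soft-thresholding. If anything, your treatment is slightly more careful than the paper's, since you invoke Slater's condition for strong duality and handle the subdifferential at ${\mathbf W}^m={\mathbf 0}$ (the full unit ball) explicitly, whereas the paper only writes the subgradient formula for the nonzero case.
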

\begin{proof}
See Appendix B.
\end{proof}

With the obtained optimal $\{\bar{\pmb\phi}_k\}_{k\in{\cal K}}$, the optimal multiplier $\mu_n^{k}$ of the boundary constraint (\ref{eq:17}) can be optimally obtained  by
\begin{equation}
\mu_n^k=\left( p_k^{\max}-\bar{\pmb\phi}_k^{\dag}(t+1){\mathbf e}_n{\mathbf e}_n^{\dag}\bar{\pmb\phi}_k(t+1)\right)^{+}, \forall k\in{\cal K}; n\in{\cal N},
\end{equation}
where $(x)^{+}=\max\{x, 0\}.$

Finally, we optimize ${\mathbf F}$ in (\ref{eq:18}) given fixed $\bar{\pmb\Phi}.$
The problem of ${\mathbf F}$ of (\ref{eq:18}) is expressed as
\begin{equation}\label{s:30}
\begin{aligned}
{ P}_{\mathbf F}: \min_{\mathbf F}&~
 \text{Re}\{\text{Tr}[{\pmb\Psi}^{\dag}(t)({\mathbf F}-[\widetilde{\mathbf H}^{ \dag}\tilde{\pmb\Phi}(t+1), \sigma{\mathbf 1}_{K}])  ]\}\\
&~~~~~~+\frac{c}{2}|| {\mathbf F}- [\widetilde{\mathbf H}^{\dag}\widetilde{\pmb\Phi}(t+1), \sigma{\mathbf 1}_{K}]||_F^2\\
\text{s.t.}~&~\sqrt{\gamma^{-1}}f_{k,k}\geq ||{\mathbf f}_{-k,k}||_2, \forall k\in {\cal K},
\end{aligned}
\end{equation}
where ${\mathbf f}_{-k,k}\in{\mathbb C}^{1\times K}$ denotes the remaining subvector of ${\mathbf f}_{k}\in{\mathbb C}^{1\times (K+1)}$ after removing the element $f_{k,k},$ i.e.,
${\mathbf f}_{-k,k}=[f_{1,k},  \ldots,f_{k-1,k},f_{k+1, k}, \ldots, f_{K+1,k}]\in{\mathbb C}^{1\times K}.$
Similar to \cite{Lin2016Joint}, in order to find the optimal ${\mathbf F}$ for ${ P}_{\mathbf F}$ with low computational complexity, we divide the optimization problem of ${\mathbf F}$ into $K$ independent subproblems of ${\mathbf f}_k, \forall k\in{\cal K},$ which are solved in parallel, where the subproblem is given by
\begin{equation}\label{s:31}
\begin{aligned}
&\min_{{\mathbf f}_k}
\text{Re}\left\{\text{Tr}\left[{\pmb\psi}_k(t)^{\dag}\left({\mathbf f}_k-{\mathbf b}_k(t+1) \right)\right]\right\}+\frac{c}{2}|| {\mathbf f}_k-{\mathbf b}_k(t+1) ||_2^2\\
&\text{s.t. ~} \sqrt{\gamma^{-1}} f_{k,k}\geq ||{\mathbf f}_{-k,k} ||_2,
\end{aligned}
\end{equation}
where ${\pmb\psi}_k\in{\mathbb C}^{1\times (K+1)}$ and $ {\mathbf b}_k\in{\mathbb C}^{1\times(K+1)}$ denote the $k\text{-th}$ row vectors of ${\pmb\Psi}$ and $[\widetilde{\mathbf H}^{\dag}\widetilde{\pmb\Phi}, \sigma{\mathbf 1}_{K}],$ respectively.
Define $\psi_{k,k}$ as the $k\text{-th}$ element of ${\pmb\psi}_k$ and ${\pmb\psi}_{-k,k}$ as the remaining subvector after removing $\psi_{k,k}.$
Similarly, we define $b_{k,k}$ and ${\mathbf b}_{-k,k}$.

Problem (\ref{s:31}) is a convex minimization problem. Moreover, it can be verified that the Slater's
constraint qualification is satisfied \cite{Boyd2009Convex}.
Therefore, the duality gap between problem (\ref{s:31}) and its duality problem is zero.
This means that the optimal solution of problem (\ref{s:31}) can be obtained by applying the Lagrange duality theory \cite{Boyd2009Convex}. In the following Theorem 2,  the optimal ${\mathbf f}_k$ can be obtained via exploiting the Karush-Kuhn-Tucker (KKT) conditions of problem (\ref{s:31}).
\begin{theorem}
Given ${\pmb\Psi},$ fixing $\bar{\pmb\Phi}(t+1),$ the optimal ${\mathbf f}_k$ is
\begin{equation}\label{s:33-1}
\left\{\begin{array}{ll}
&f_{k,k}(t+1)=\frac{cb_{k,k}(t+1)-\psi_{k,k}(t)+\sqrt{\gamma^{-1}}\varepsilon_k}{c}\\
&{\mathbf f}_{-k,k}(t+1)=\frac{c{\mathbf b}_{-k,k}(t+1)-{\pmb\psi}_{-k,k}(t)}
{c+\varepsilon_k\rho_k},
\end{array}\right.
\end{equation}
where $\rho_k=(||{\mathbf f}_{-k,k}(t+1)||_F)^{-1}$, $\varepsilon_k\geq 0$ is the dual variable introduced for the SINR constraint, which is optimally determined by
\begin{equation}\label{sss:1}
\begin{aligned}
\varepsilon_k=&\frac{1}{1+\gamma}[\gamma||c{\mathbf b}_{-k,k}(t+1)-{\pmb\psi}_{-k,k}(t) ||_2\\
&-\sqrt{\gamma}(cb_{k,k}(t+1)-\psi_{k,k}(t))].
\end{aligned}
\end{equation}
\end{theorem}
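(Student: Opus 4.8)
The plan is to treat (\ref{s:31}) as a convex second-order-cone-constrained quadratic program and to characterize its minimizer through the Karush--Kuhn--Tucker (KKT) conditions. Since the objective is strictly convex in ${\mathbf f}_k$ and, as already noted, Slater's condition is satisfied, the KKT system is both necessary and sufficient, so it suffices to exhibit a primal point together with a nonnegative multiplier meeting stationarity and complementary slackness. First I would rewrite the SINR constraint in standard form $\|{\mathbf f}_{-k,k}\|_2-\sqrt{\gamma^{-1}}\,f_{k,k}\le 0$ and attach to it a single scalar dual variable $\varepsilon_k\ge 0$, forming the Lagrangian
\[
\mathcal{L}=\text{Re}\{\text{Tr}[{\pmb\psi}_k^{\dag}({\mathbf f}_k-{\mathbf b}_k)]\}+\frac{c}{2}\|{\mathbf f}_k-{\mathbf b}_k\|_2^2+\varepsilon_k\big(\|{\mathbf f}_{-k,k}\|_2-\sqrt{\gamma^{-1}}\,f_{k,k}\big).
\]

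Next I would exploit the fact that both the objective and the constraint decouple the (real, by (\ref{eq:14-1})) scalar $f_{k,k}$ from the complex block ${\mathbf f}_{-k,k}$, and set the stationarity equations separately, using the ordinary derivative in $f_{k,k}$ and the Wirtinger gradient in ${\mathbf f}_{-k,k}$. Stationarity in $f_{k,k}$ gives $c(f_{k,k}-b_{k,k})+\psi_{k,k}-\sqrt{\gamma^{-1}}\,\varepsilon_k=0$, which rearranges to the first line of (\ref{s:33-1}). Stationarity in ${\mathbf f}_{-k,k}$, taken on the open set where ${\mathbf f}_{-k,k}\neq{\mathbf 0}$ so that $\|{\mathbf f}_{-k,k}\|_2$ is differentiable with gradient ${\mathbf f}_{-k,k}/\|{\mathbf f}_{-k,k}\|_2$, yields $c({\mathbf f}_{-k,k}-{\mathbf b}_{-k,k})+{\pmb\psi}_{-k,k}+\varepsilon_k\rho_k{\mathbf f}_{-k,k}={\mathbf 0}$ with $\rho_k=1/\|{\mathbf f}_{-k,k}\|_2$; collecting the ${\mathbf f}_{-k,k}$ terms produces the second line of (\ref{s:33-1}).

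Finally I would pin down $\varepsilon_k$ from complementary slackness. In the active regime (an inactive constraint simply returns the unconstrained minimizer with $\varepsilon_k=0$), multiplying the ${\mathbf f}_{-k,k}$ stationarity relation by ${\mathbf f}_{-k,k}^{\dag}$ and using $\rho_k\|{\mathbf f}_{-k,k}\|_2=1$ gives $c\|{\mathbf f}_{-k,k}\|_2=\|c{\mathbf b}_{-k,k}-{\pmb\psi}_{-k,k}\|_2-\varepsilon_k$. Substituting this and the closed form of $f_{k,k}$ into the active-constraint equality $\|{\mathbf f}_{-k,k}\|_2=\sqrt{\gamma^{-1}}\,f_{k,k}$ and clearing $c$ reduces everything to a single linear equation, $\|c{\mathbf b}_{-k,k}-{\pmb\psi}_{-k,k}\|_2-\gamma^{-1/2}(cb_{k,k}-\psi_{k,k})=\varepsilon_k(1+\gamma^{-1})$, whose solution, after multiplying numerator and denominator by $\gamma$, is precisely (\ref{sss:1}).

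I expect the main obstacle to be the self-referential appearance of $\rho_k=1/\|{\mathbf f}_{-k,k}\|_2$ inside the formula for ${\mathbf f}_{-k,k}$, which makes (\ref{s:33-1}) look circular; the resolution is to observe that the ${\mathbf f}_{-k,k}$ stationarity equation forces ${\mathbf f}_{-k,k}$ to be a positive scalar multiple of $c{\mathbf b}_{-k,k}-{\pmb\psi}_{-k,k}$, so its \emph{direction} is fixed and only its magnitude is coupled to $\varepsilon_k$, after which the scalar algebra above decouples cleanly. The remaining care is the complex-valued (Wirtinger) differentiation of the linear and norm terms and the verification that the resulting $\varepsilon_k$ is indeed nonnegative, consistent with the assumed active case.
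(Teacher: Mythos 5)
Your proposal is correct and takes essentially the same route as the paper's Appendix C: write the KKT conditions for the convex SOC-constrained subproblem (\ref{s:31}), solve stationarity separately for the real scalar $f_{k,k}$ and the block ${\mathbf f}_{-k,k}$ (whose direction is fixed along $c{\mathbf b}_{-k,k}-{\pmb\psi}_{-k,k}$), and then use complementary slackness together with $\rho_k\|{\mathbf f}_{-k,k}\|_2=1$ to reduce the active case to a linear equation in $\varepsilon_k$. If anything, your derivation is internally cleaner than the paper's, since it keeps the constraint normalization of (\ref{s:31}) throughout and lands exactly on the stated formulas, whereas the paper's first KKT condition carries a $\varepsilon_k/\sqrt{\gamma^{-1}}$ versus $\sqrt{\gamma^{-1}}\varepsilon_k$ slip that its own final expressions quietly correct.
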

\begin{proof}
See Appendix C.
\end{proof}

After obtaining the optimal $\bar{\pmb\Phi},$ ${\mathbf W},$ and ${\mathbf F},$ we update the Lagrangian matrix multipliers in problem (P1--4), i.e., ${\pmb\Lambda}$ and $\pmb\Psi$.
It has been shown in \cite{Ramamonjison2015Energy} that the well known subgradient based method can be employed iteratively to find the optimal solutions of ${\pmb\Lambda}$ and ${\pmb\Psi}$.
Similar to the updating of variables $\{{\mathbf W}^m\}_{m=1}^M$ and $\{{\mathbf f}_k\}_{k=1}^K,$ updating ${\pmb\Lambda}$ and ${\pmb\Psi}$ are also separable.
Specifically, for ${\pmb\Lambda}^m$ and ${\pmb\psi}_k$, the pointwise update equations are given by
\begin{align}
{\pmb\Lambda}^m(t+1)&={\pmb\Lambda}^m(t)+c\left({\mathbf W}^m(t+1)-\bar{\pmb\Phi}^m(t+1)\right)
\label{s:34-1}\\
{\pmb\psi}_k(t+1)&={\pmb\psi}_k(t)+c({\mathbf f}_k(t+1)-[\bar{\mathbf H}^{k\dag}\widetilde{\pmb\Phi}(t+1), \sigma]).\label{s:34-2}
\end{align}

\subsection{Max-Min SINR Optimization}

In this section, we solve the joint transmit power allocation and passive beamformer design problem when the triggered module subset is identified.
More precisely,  for the original max-min SINR  problem (P1), the module size constraint (\ref{add:3}) is dropped and the diagonal blocks of $\pmb\Phi$ corresponding to the non-triggered modules are forced to be zero.
For convenience to illustrate, the phase-shift matrix with identified triggered modules denoted by ${\cal F}_{\pmb\Phi}.$
Particularly, we focus on solving:
\begin{equation}\label{eq:20}
\begin{aligned}
\max_{\{p_k\}_{k\in{\cal K}}, {\cal F}_{\pmb\Phi}}\min_{k\in{\cal K}}& \frac{p_k|{\mathbf g}_k^{\dag}{\cal F}_{\pmb\Phi}{\mathbf h}_k|^2}
{\sum_{j=1,j\neq k}^Kp_j|{\mathbf g}_k^{\dag}{\cal F}_{\pmb\Phi}{\mathbf h}_j|^2+\sigma^2},
\text{~s.t.~}(\ref{add:3-2})
\end{aligned}
\end{equation}
Note that (\ref{eq:20}) can be efficiently and optimally solved by employing the alternating and optimization technique \cite{Csiszar1984Information} to separately and iteratively solve for $\{p_k\}_{k\in{\cal K}}$ and ${\cal F}_{\pmb\Phi}.$
In the rest of this section, the optimization with respect to ${\cal F}_{\pmb\Phi}$ for fixed $\{p_k\}_{k\in{\cal K}}$, and with respect to $\{p_k\}_{k\in{\cal K}}$ for fixed ${\cal F}_{\pmb\Phi}$ will be treated separately.

\subsubsection{Optimizing Phase-Shift Matrix ${\cal F}_{\pmb\Phi}$}
Let ${\cal F}_{\pmb\phi}\in{\mathbb C}^{N\times N}$ denote the vectorization of diagonal matrix ${\cal F}_{\pmb\Phi}.$ Substituting $\bar{\mathbf h}_{j,k}$ into the objective function of (\ref{eq:20}), then, ${p_k|{\mathbf g}_k^{\dag}{\cal F}_{\pmb\Phi}{\mathbf h}_k|^2}=p_k{{\cal F}_{\pmb\phi}}^{\dag}\bar{\mathbf h}_{k,k}\bar{\mathbf h}_{k,k}^{\dag}{\cal F}_{\pmb\phi},$
${\sum_{j=1,j\neq k}^Kp_j|{\mathbf g}_k^{\dag}{\cal F}_{\pmb\Phi}{\mathbf h}_j|^2+\sigma^2}=\sum_{j=1,j\neq k}^{K}p_j{\cal F}_{\pmb\phi}^{\dag}\bar{\mathbf h}_{j,k}\bar{\mathbf h}_{j,k}^{\dag}{\cal F}_{\pmb\phi}+\sigma^2$ for all $k$ and $j.$
{
Therefore, for a fixed transmit power allocation $\{p_k\}_{k=1}^K$,  problem (\ref{eq:20}) can be transformed into the following problem:
\begin{equation}\label{add:5}
\begin{aligned}
\max_{{\cal F}_{\pmb\phi}}\min_k \frac{p_k{{\cal F}_{\pmb\phi}}^{\dag}\bar{\mathbf h}_{k,k}\bar{\mathbf h}_{k,k}^{\dag}{\cal F}_{\pmb\phi}}
{\sum_{j=1,j\neq k}^{K}p_j{\cal F}_{\pmb\phi}^{\dag}\bar{\mathbf h}_{j,k}\bar{\mathbf h}_{j,k}^{\dag}{\cal F}_{\pmb\phi}+\sigma^2},
~~~\text{s.t.}~{\cal X}.
\end{aligned}
\end{equation}}
\begin{algorithm}[!t]
\caption{ Max-min SINR Optimization When the Triggered Module Subset is Identified}
\label{alg:1}
\begin{algorithmic}[1] 
\STATE Initialize  ${\cal F}_{\pmb\phi}^{(0)} $ and ${\mathbf p}^{(0)}$ to feasible values; initialize the upper bound $\bar{\gamma}_{\text{in}}=\bar{\gamma}_{\text{out}}$ and the lower bound $\underline{\gamma}_{\text{in}}=\underline{\gamma}_{\text{out}}$ of SINR in bisection,
and set the iteration number $\tau=0.$

{\bf repeat}

\STATE  For given ${\mathbf p}^{(\tau)},$ update $\gamma_{\text{out}}=\frac{\bar{\gamma}_{\text{out}}+\underline{\gamma}_{\text{out}}}{2},$ solve (\ref{add:6-1})-(\ref{add:6-3}) by CVX,

\hspace*{0.5cm}{\bf if } \texttt{CVX Status is Solved}, update $\underline{\gamma}_{\text{out}}=\gamma_{\text{out}};$

\hspace*{0.5cm}{\bf else}, \texttt{CVX Status is NAN}, update $\bar{\gamma}_{\text{out}}=\gamma_{\text{out}};$

\STATE {\bf if} $\gamma_{\text{out}}$ converges, i.e., $|\bar{\gamma}_{\text{out}}-\underline{\gamma}_{\text{out}}|\leq \epsilon,$
denote $\gamma_{\text{out}}^{(\tau+1)}, {\cal F}_{\pmb\phi}^{(\tau+1)}$ be the  optimal solution;

\STATE For given ${{\cal F}_{\pmb\phi}}^{(\tau)},$ update $\gamma_{\text{in}}=\frac{\bar{\gamma}_{\text{in}}+\underline{\gamma}_{\text{in}}}{2},$ solve (\ref{add:7}) by CVX,

\hspace*{0.5cm}{\bf if } \texttt{CVX Status is Solved}, update $\underline{\gamma}_{\text{in}}=\gamma_{\text{in}};$

\hspace*{0.5cm}{\bf else}, \texttt{CVX Status is NAN}, update $\bar{\gamma}_{\text{in}}=\gamma_{\text{in}};$

\STATE {\bf if} $\gamma_{\text{in}}$ converges, i.e., $|\bar{\gamma}_{\text{in}}-\underline{\gamma}_{\text{in}}|\leq \epsilon,$
denote $\gamma_{\text{in}}^{(\tau+1)}, {\mathbf p}^{(\tau+1)}$ be the optimal solution;

\STATE  {\bf until}  $|\gamma_{\text{out}}-\gamma_{\text{in}}|\leq \epsilon$. 
\end{algorithmic}
\end{algorithm}
{
By introducing parameter $\gamma_{\text{out}}$, problem (\ref{add:5}) is equivalent to solving problem (\ref{add:6-1})-(\ref{add:6-3}), which can be solved via a bisection procedure of feasibility checking
\begin{align}
\max_{{\cal F}_{\pmb\phi},\gamma_{\text{out}}}&~\gamma_{\text{out}}\label{add:6-1}\\
\text{s.t.}~&\sqrt{p_k(1+\gamma_{\text{out}}^{-1})}\bar{\mathbf h}_{k,k}^{\dag}{\cal F}_{\pmb\phi}\geq \nonumber\\
& \| [{\cal F}_{\pmb\phi}^{T}\bar{\mathbf h}_k^{\dag}\text{diag}\{\sqrt{p_1}{\mathbf 1}_N, \ldots, \sqrt{p_K}{\mathbf 1}_N\}, \sigma]\|_2, \forall k\in{\cal K}\label{add:6-2}\\
&\text{Im}(\bar{\mathbf h}_{k,k}^{\dag}{\cal F}_{\pmb\phi})=0, \forall k\in{\cal K}; \text{~and~} {\cal X},\label{add:6-3}
\end{align}
where $\left[{\cal F}_{\pmb\phi}^{T}\bar{\mathbf h}_k^{\dag}\text{diag}\{\sqrt{p_1}{\mathbf 1}_N, \ldots, \sqrt{p_K}{\mathbf 1}_N\}, \sigma\right]$ is the $1\times (K+1)$ row vector.
The feasibility checking of (\ref{add:6-1})-(\ref{add:6-3}) can be solved directly, e.g., by CVX, since it is a second-order cone program.}

\subsubsection{Optimization with Respect to the Power Allocation $\{p_k\}_{k\in{\cal K}}$}
{For the case where ${\cal F}_{\pmb\phi}$ is fixed and the objective is the optimization over ${\mathbf p}=[p_1, p_2, \ldots, p_K]^T,$ consequently, we focus on our attention in the following optimization problem
\begin{equation}\label{add:7}
\begin{aligned}
\max_{\{p_k\}_{k=1}^K}&\min_{k} \frac{p_k|\bar{\mathbf h}_{k,k}^{\dag}{\cal F}_{\pmb\phi}|^2}
{\sum_{j=1,j\neq k}^Kp_j|\bar{\mathbf h}_{j,k}^{\dag}{\cal F}_{\pmb\phi} |^2+\sigma^2},
\text{~s.t.~} p_k\leq p_k^{\max}.
\end{aligned}
\end{equation}
Likewise, the above max-min SINR problem is equivalent to the following problem, which can be solved also via a bisection procedure of feasibility checking
\begin{equation}\label{add:8}
\begin{aligned}
&\max_{\{p_k\}_{k=1}^K,\gamma_{\text{in}}}~\gamma_{\text{in}}\\
\text{s.t.}~& \text{SINR}_k\geq \gamma_{\text{in}},  \text{~and~} p_k\leq p_k^{\max}.
\end{aligned}
\end{equation}
}

In the proposed alternating optimization algorithm, we solve ${\mathbf p}$ and ${\cal F}_{\pmb\phi}$ by addressing problems (\ref{add:6-1})-(\ref{add:6-3}) and (\ref{add:8}) alternately in an iterative manner, where the solution obtained in each iteration
is used as the initial point of the next iteration. The details of the proposed algorithm are
summarized in Algorithm \ref{alg:1}.

Based on the analysis of the triggered module subset identification and the general max-min SINR optimization problem, we present the two-block ADMM algorithm shown in Algorithm \ref{alg:Framwork} for solving the considered optimization problem (P1-1).

\begin{algorithm}[!t]
\caption{ Two-block ADMM Algorithm for (P1-1) }
\label{alg:Framwork}
\begin{algorithmic}[1] 
\STATE Initialization:

Input communication system configurations and algorithm parameters;

{\em{phase 1: Triggered module subset identification }}

\STATE Set outer (bisection) iteration index $\tau=0$;

\STATE Update $\gamma=\frac{\underline{\gamma}+\bar{\gamma}}{2},$ where $\underline{\gamma}$ and $\bar{\gamma}$ are the lower bound and upper bound of SINR in bisection.

\STATE Set inner (ADMM) iteration index $t=0$;

\STATE  Initialize $ \{\bar{\pmb\phi}_k(t)\}_{k\in{\cal K}}, {\mathbf W}(t), \{{\mathbf F}(t)\}, {\pmb\Lambda}(t),$ and $\{{\pmb\Psi}(t)\}$;

\STATE Update $\bar{\pmb\phi}_k(t+1)$ as (\ref{s:24-1});
  ${\mathbf W}^m(t+1)$ as (\ref{s:29-1});
   ${\mathbf f}_k(t+1)$ as (\ref{s:33-1});
   ${\pmb\Lambda}^{m}(t+1)$ as (\ref{s:34-1});  ${\pmb\psi}_k(t+1)$ as (\ref{s:34-2}) in parallel; $\forall k=1, 2, \ldots, K$; $m=1, 2, \ldots, {M}$;
   
\STATE {\bf if} not converge and max iteration number not achieved, 

 \hspace*{1.5cm} $t=t+1,$ {\bf go to} (5);
 
 \hspace*{2mm} {\bf else if }convergence, compare $\sum_{m=1}^{{M}}\alpha||{\mathbf W}^m||_F$ with $\delta$,
 
 \hspace*{0.4cm} {\bf if } $``\leq"$, (P1--4) feasible for $\gamma,$ update $\underline{\gamma}=\gamma$;
 
 \hspace*{0.4cm} {\bf else} (P1--4) infeasible for $\gamma$, update $\bar{\gamma}=\gamma$;
 
\STATE  {\bf if} $\gamma$ converges, {\bf go to} (8)

 \hspace*{10mm} {\bf else}, $\tau=\tau+1,$ {\bf go to } (2);
 
\STATE Identify the triggered modules at IRS  by exploring the sparse pattern of ${\mathbf W};$

  {\em{phase 2: Transmit power allocation and passive beamforming design}}

\STATE Run Alg. \ref{alg:1} for (\ref{eq:20}) to obtain ${\mathbf p}$ and ${\cal F}_{\pmb\phi}.$
\end{algorithmic}
\end{algorithm}

\begin{table*}[!t]
\centering  \caption{Simulation Parameters}
\begin{tabular}{|p{0.42\textwidth}|p{0.1\textwidth}||p{0.25\textwidth}|l|}
\hline
Maximum transmit power of ST, $p^{\max}$ & {$20\text{~dBm}$}&
Noise power, $\sigma^2$ & $ -90\text{~dBm}$\\ \hline
Bandwidth, ${\cal W}$ & $10\text{~MHz}$&
Carrier frequency & $2.3\text{~GHz}$\\  \hline
{The number of reflection elements of each module, $L$}  & {$20$}&
Weighted factor, $\alpha$ & $1/(\delta+0.01)$ \\\hline
Circuit dissipated power at each ST, $P_{\text{ST}}$&10\text{~dBm}& Dissipated power at each DT, $P_{\text{DT}}$& $10\text{~dBm}$\\ \hline
Circuit dissipated power coefficients at ST $\xi_{\text{ST}}$ and AF relay $\xi_{\text{AF}}$&{$1.2$}& {Dissipated power at each module, $P(L)$}&{$(L\cdot 0.01)$ W}\\ \hline
\end{tabular}
\label{tab:1}
\end{table*}

\subsection{{Algorithm Summary}}
{Algorithm \ref{alg:Framwork} consists of two phases: (i) the {\em{triggered module subset identification}} phase (lines 2--9) and (ii) the {\em{transmit power allocation and passive beamformer design}} phase (line 10).
Except for the update of $\{\bar{\pmb\phi}_k\}_{k\in{\cal K}}$, the two-block ADMM algorithm can be implemented  parallelly and distributedly for triggered module subset identification (i.e., ${\mathbf W}$).
For our scenario, Alg. 2 starts by the  processor (it can communicate with all users and IRS' controllers) collecting the CSI of $\{{\mathbf h}_k\}_{k=1}^K$ and $\{{\mathbf g}_k\}_{k=1}^K$; all the auxiliary variables ${\mathbf W}$ and $ {\mathbf F}$ and the corresponding Lagrangian multipliers ${\pmb\Lambda}$ and ${\pmb\Psi}$ are initialized.
In each iteration, the processor updates $\{\bar{\pmb\phi}_k\}_{k=1}^K$  and sends $\bar{\pmb\Phi}^m$ and $\bar{\mathbf h}_{k,k}^{\dag}\bar{\pmb\phi}_k$ to the $m\text{-th}$ module and $S_k$, respectively.
In the experiments,  each IRS module $m$ executes $\{{\mathbf W}^m, {\pmb\Lambda}^m\}$ and sends it to the processor.
The $K$ STs update $\{{\mathbf f}_k^{\dag}, {\pmb\psi}_k^{\dag}\}_{k=1}^{K}$ and send it back to the processor.
Note that the difference between the two-block ADMM algorithm and the centralized algorithm is that there is no information exchange during solving (\ref{eq:12})-(\ref{eq:14}) for the latter.
In fact, in the centralized algorithm, all the calculations are completed in the central controller.
Note that as each inner iteration of Alg. 2  requires only solving convex optimization problems, hence the overall complexity of Alg. 2 is polynomial in the worst scenario.
Since the sets of solution problems (P1--3) and (P1--4) are nonempty (at least one zero solution), and
the set of solution of (P1--3) is closed and bounded, Alg. 2 is guaranteed to converge to the global optimum [30].
}

\section{Simulation Results}

\subsection{Simulation Environments and Settings}
We evaluate the performance of the proposed joint design of triggered module subset identification, transmit power allocation, and the corresponding passive beamformer in the IRS-aided P2P networks.
The convergence property and effectiveness of the two-block ADMM algorithm are verified.
We consider the IRS-aided cooperative communication system consisting with $K$ ST-DT pairs and an IRS with $N=ML$ reflecting elements, where $M$ is the number of IRS modules and $L$ is the number of reflecting elements of each module.
{Suppose that the $K$ STs are randomly and uniformly deployed within a circle cell centered at $(0,0)~\text{m}$ with the cell radius $2~\text{m}$,
and the corresponding $K$ DTs are located within a circle cell with radius $2~\text{m}$ centered at $(200, 0)~\text{m}.$
Besides, the IRS is assumed to be fixed at the location $(120, 50)\text{~m}.$
Unless specified otherwise, the simulation setting (outlined in Table \ref{tab:1}) is given as follows.
The number of ST-DT pairs is less than or equal to the number of modules at the IRS, i.e, $K\leq M.$
We consider an IRS-aided communication system with carrier frequency  $2.3\text{~GHz}$ and a system bandwidth ${\cal W}=10\text{~MHz}.$
From \cite{Zheng2020Intelligent}, we set the path loss exponent of the ST-DT pair direct link as $3.5,$   and the path loss at the reference distance  $1\text{~m}$ is set as $30\text{~dB}$ for each individual link \cite{Wu2019Intelligentjournal, Guo2019Weighted}.
For the IRS-aided link, $2$ and $2.1$ are the values of the path loss exponents from STs to the IRS and that from the IRS to DTs, respectively.
Moreover, the path loss model for the NLOS paths is characterized by Rayleigh fading.
Channel vectors $\{{\mathbf h}_{k}\}$ and $\{{\mathbf g}_{k}\}$ are  generated as i.i.d. zero-mean complex Gaussian random vectors, where the variance of each channel is determined using pathloss model $\sigma_{{\mathbf h}_{k}}^2=(200/d_{{\mathbf h}_{k}})^2$ with $d_{{\mathbf h}_{k}}$ as the distance between ST $k$ and IRS \cite{Lin2016Joint}.
Likewise, $\{{\mathbf g}_{k}\}$ can be generated according to the distribution ${\cal C}{\cal N}(0,\sigma_{{\mathbf g}_{k}}^2)$, where the variance is given by $\sigma_{{\mathbf g}_{k}}^2=(200/d_{{\mathbf g}_{k}})^{2.1}$ with $d_{{\mathbf g}_{k}}$ being the distance between IRS and DT $k.$
We assume quasi-static block fading channels in this paper, i.e., the channels from the STs to the IRS and that from the IRS to the DTs remain constant during each time block, but may vary from one to another \cite{Yang2019Low}.
For the proposed two-block ADMM, the convergence tolerance is $\epsilon=10^{-4}.$
For simplicity, all the STs are assumed to have the same maximum transmit power, i.e., $p_k^{\max}=p^{\max}=20\text{~dBm}$ and the noise power at all the destination terminals is assumed to be identical with $\sigma^2=-90\text{~dBm}.$
The number of reflecting elements of each module  is $L=20$  and each user terminal is equipped with a single antenna.
All the simulation results are obtained by averaging over $10^4$ channel realizations. Throughout the simulations, unless otherwise specified, we adopt the parameters reported in Table \ref{tab:1} (see \cite{Huang2019Reconfigurable,Zheng2020Intelligent} and references therein).
}
\begin{table*}[!t]
\centering
\caption{Performance of the Two-block ADMM Algorithm and the CVX Method for Different Sparsity Constraints $\delta \in\{4.5, 5, 5.5, 6\}$ with $K=\{5, 10\}$.  }
\resizebox{\textwidth}{10mm}{
\begin{tabular}{|l|c|c|c|c|}
\hline
~&\multicolumn{2}{|c|}{Two-block ADMM}& \multicolumn{2}{|c|}{CVX}\\ \hline
~&$K=5, M=10, L=20$&$K=10, M=10, L=20$&$K=5, M=10, L=20$&$K=10, M=10, L=20$\\ \hline
\text{Triggered module subset if} $\delta=4.5$& $\{8\}$& $\{1\}$&$\{8\}$&$\{1\}$\\ \hline
\text{Triggered  module subset if } $\delta=5$& $\{1,3,7,8,9\}$& $\{ 1,2,4,8\}$&$\{1,3,7,8,9\}$&$\{1, 2, 4,8\}$\\\hline
\text{Triggered module subset if } $\delta=5.5$ &$\{1,2,3,4,6,7,8,9,10\}$&$ \{ 1,2,3, 4, 8, 9, 10\}$&$\{1,2,3,4,6,7,8,9,10\}$&$\{1,2,3, 4, 8, 9, 10\}$\\ \hline
\text{Triggered module subset if } $\delta=6$ &$\{1,2,3,4,5,6,7,8,9,10\}$&$\{1,2, 3,4,7,8,9,10\}$&$\{1,2,3,4,5,6,7,8,9,10\}$&$\{1, 2, 3, 4, 7,8,9,10\}$\\ \hline
\end{tabular}
}
\label{tab:2}
\end{table*}

We first show the convergence behavior of the two-block ADMM algorithm. Table \ref{tab:2} summarizes the triggered module subsets by using the proposed ADMM algorithm and by the centralized algorithm CVX.
In Table \ref{tab:2}, we consider two different network settings, including $K=5, M=10$ and $K=10, M=10.$ For the two settings, we  present the triggered module subsets obtained by  ADMM algorithm and CVX with different values of sparse parameter $\delta,$ respectively.
It is clear from Table \ref{tab:2} that the ADMM algorithm can converge to the CVX solution.
\begin{figure*}[!t]
	\centering
	\subfigure{
		\begin{minipage}[t]{0.38\linewidth}
			\includegraphics[width=2.5in]{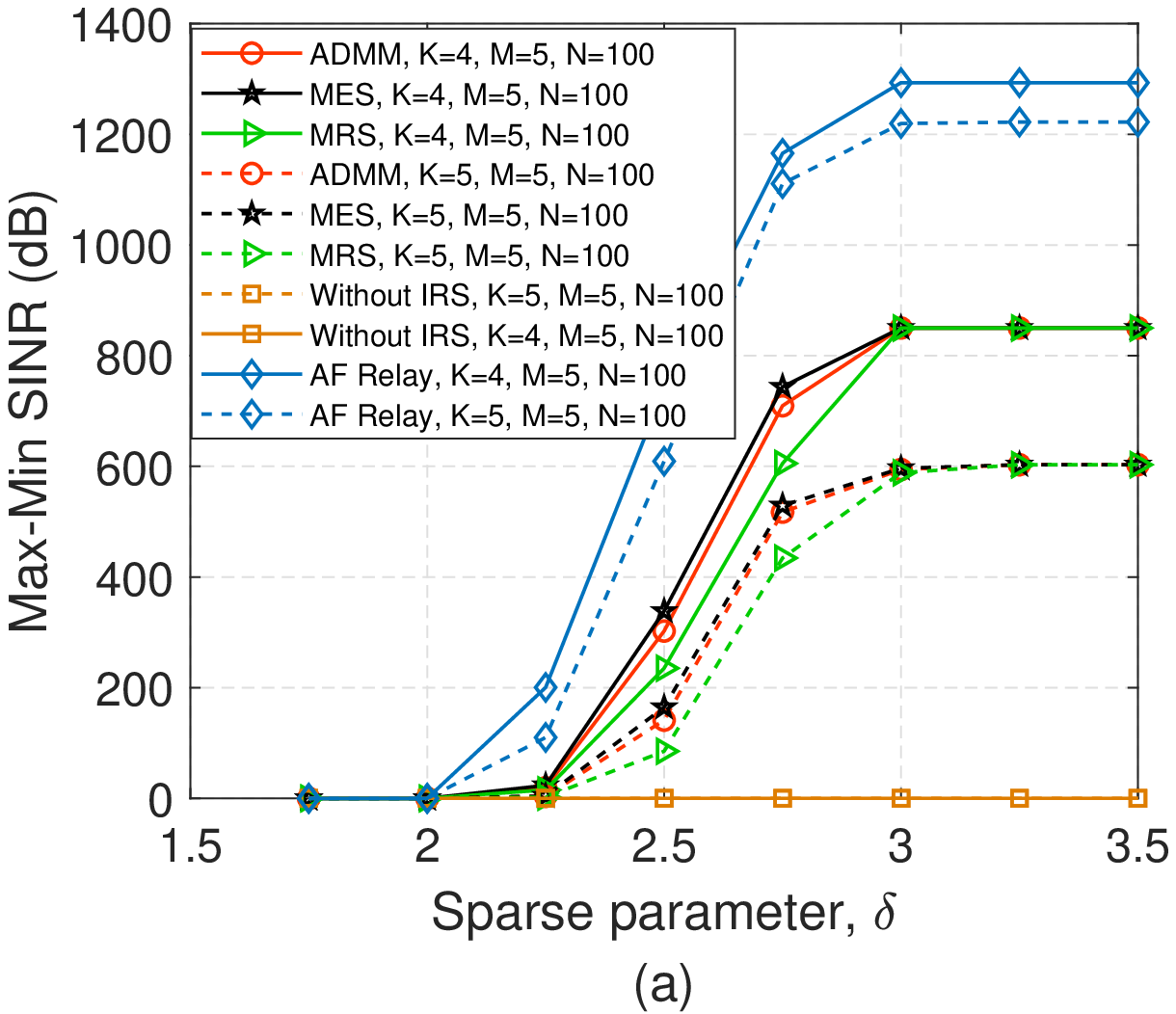}
		\end{minipage}
		\label{fig:3-1}
	}
	\centering
	\subfigure{
		\begin{minipage}[t]{0.38\linewidth}
			\includegraphics[width=2.5in]{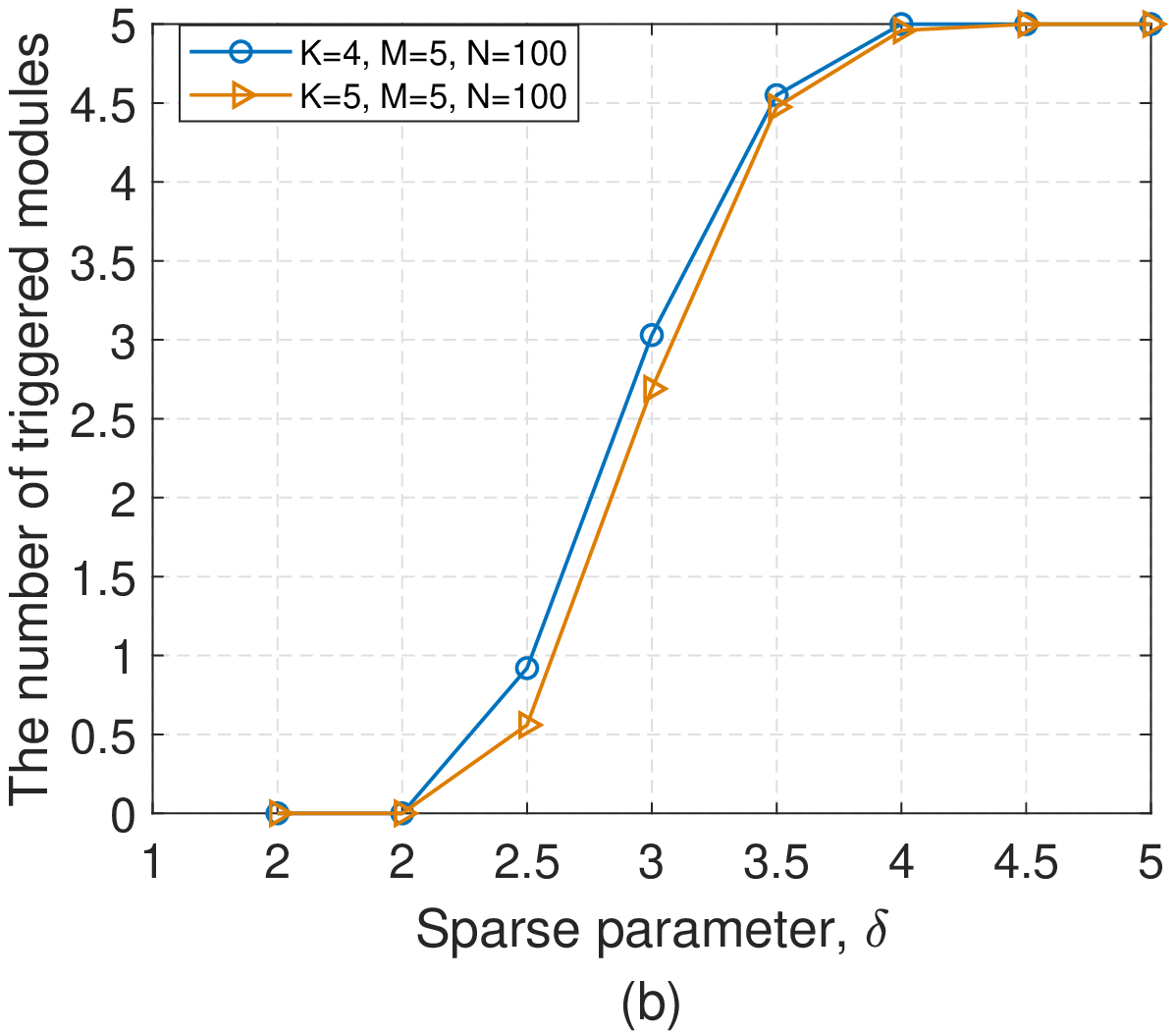}
		\end{minipage}
		\label{fig:3-2}
	}
	\caption{ (a) Max-min SINR and (b) the number of triggered modules as functions of sparse parameter $\delta$ with $K\in\{4, 5\}$, for $M=5, p^{\max}=20\text{~dBm}.$}
	\label{fig:3}
\end{figure*}
\subsection{Performance Comparison}
{For IRS-aided communication systems, we evaluate the performance of the proposed two-block ADMM algorithm with two baseline schemes in our simulations.
For baseline 1 (i.e., the case without IRS),  only the S-D direct link is considered where the number of reflecting elements at IRS is set as  $N=0.$
For baseline 2, the classic Amplify-and-Forward (AF) relaying is considered where the maximum transmit power at relay is the same as the total budgets at the ST side \cite{Huang2019Reconfigurable}, i.e., $P_r^{\max}=K p^{\max},$ for ensuring a fairness comparison.
Besides, in order to show the effectiveness of our proposed ADMM algorithm, we consider two conventional methods in our simulations,  denoted as method of exhaustive search (MES) and method of randomly selecting the triggered modules (MRS), respectively.
The premise of performance comparison between the two methods is to use the same sparsity constraint to control the size of triggered modules. In this paper, the number of triggered modules is determined by the ADMM algorithm.
Specifically, both MES and MRS perform their respective triggered modules identification and solve the conventional max-min SINR problem via bisection feasibility checking.
For the AF relay, the number of antennas used at relay is $card(\text{ subset of triggered modules})\cdot L,$ i.e., the number of antennas is the same as the number of reflecting elements.
}
\begin{figure*}[!t]
	\centering
	\subfigure{
		\begin{minipage}[t]{0.38\linewidth}
			\includegraphics[width=2.5in]{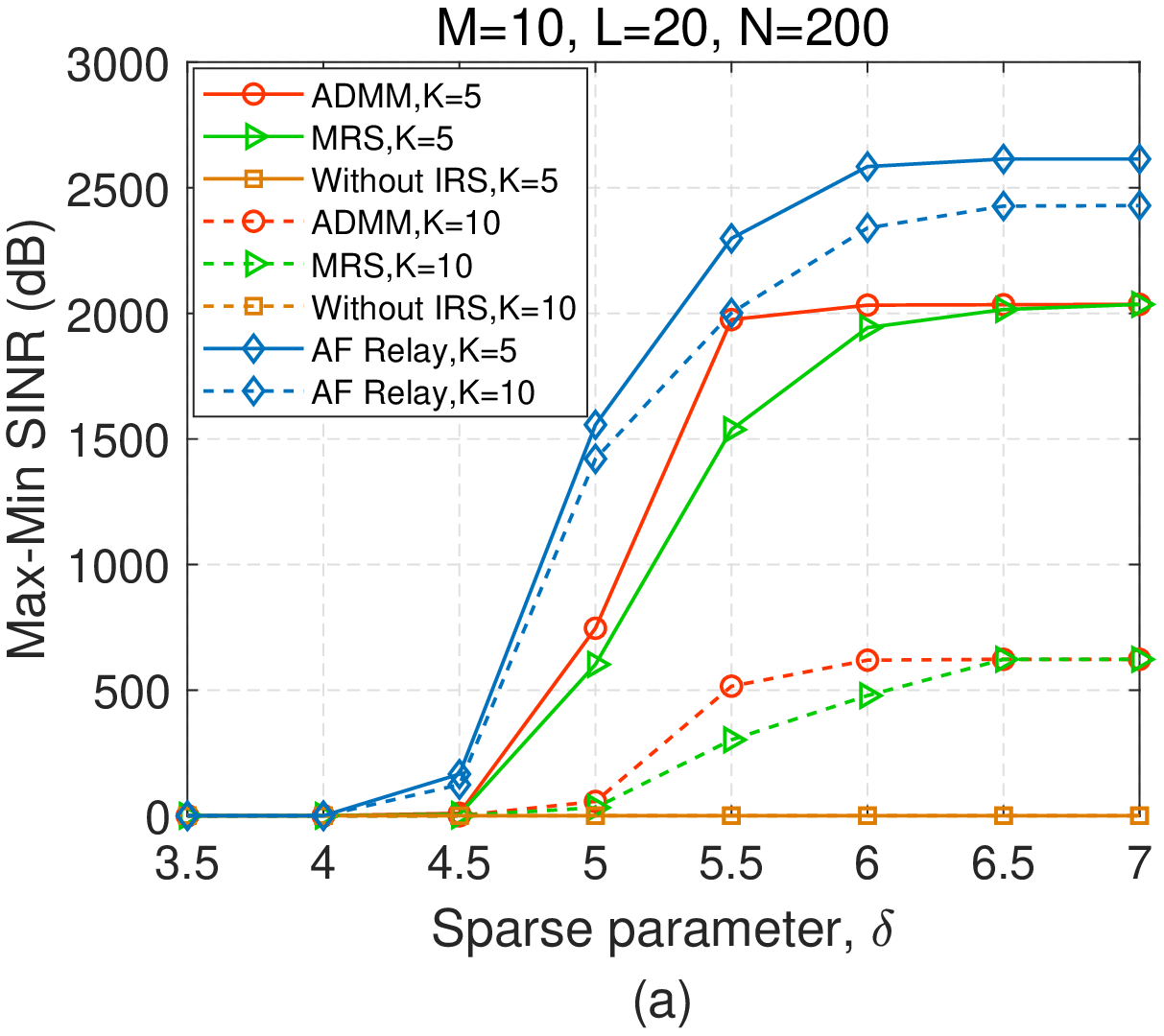}
		\end{minipage}
		\label{fig:4-1}
	}
	\centering
	\subfigure{
		\begin{minipage}[t]{0.38\linewidth}
			\includegraphics[width=2.5in]{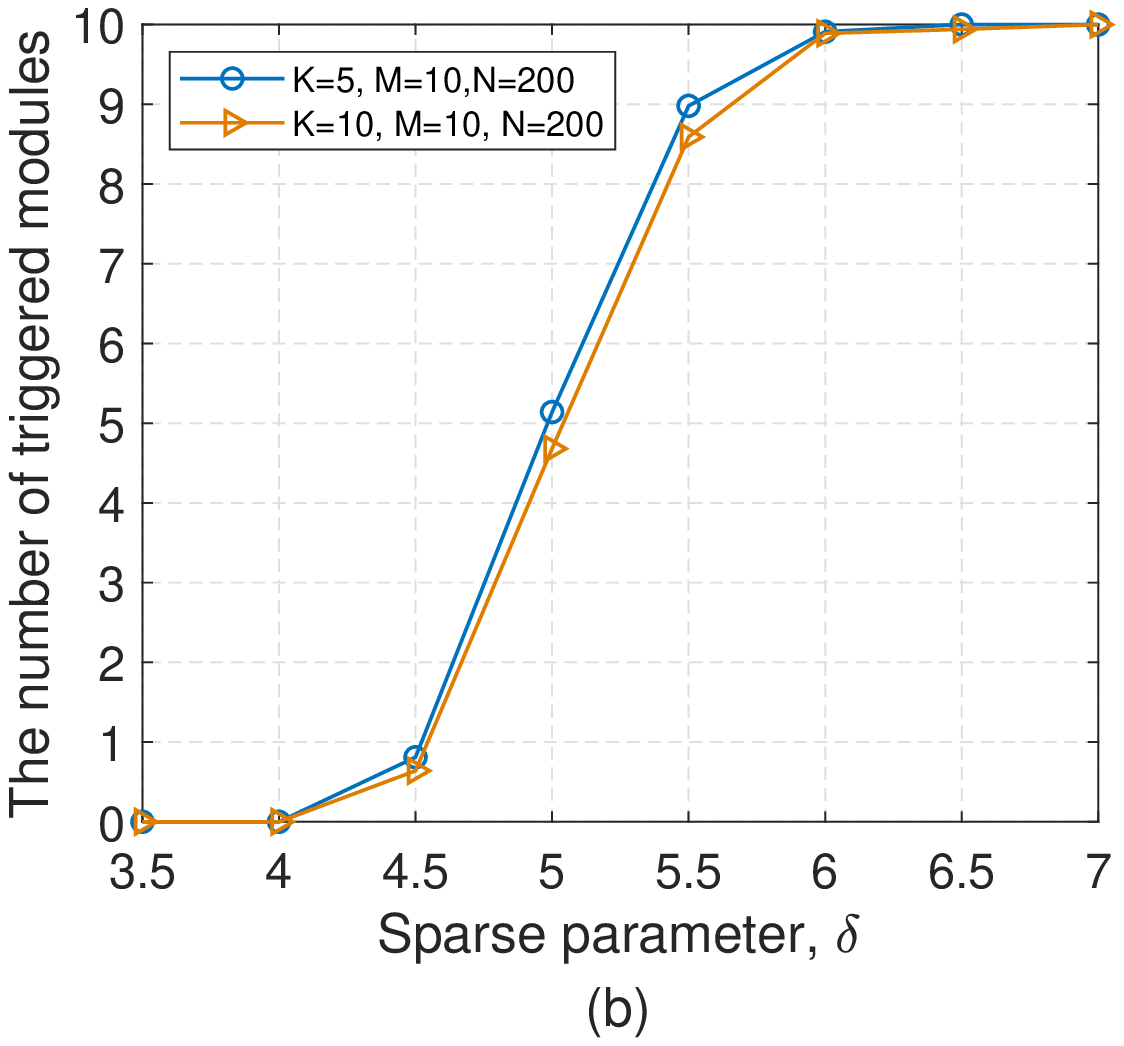}
		\end{minipage}
		\label{fig:4-2}
	}
	\centering
	\subfigure{
		\begin{minipage}[t]{0.38\linewidth}
			\includegraphics[width=2.5in]{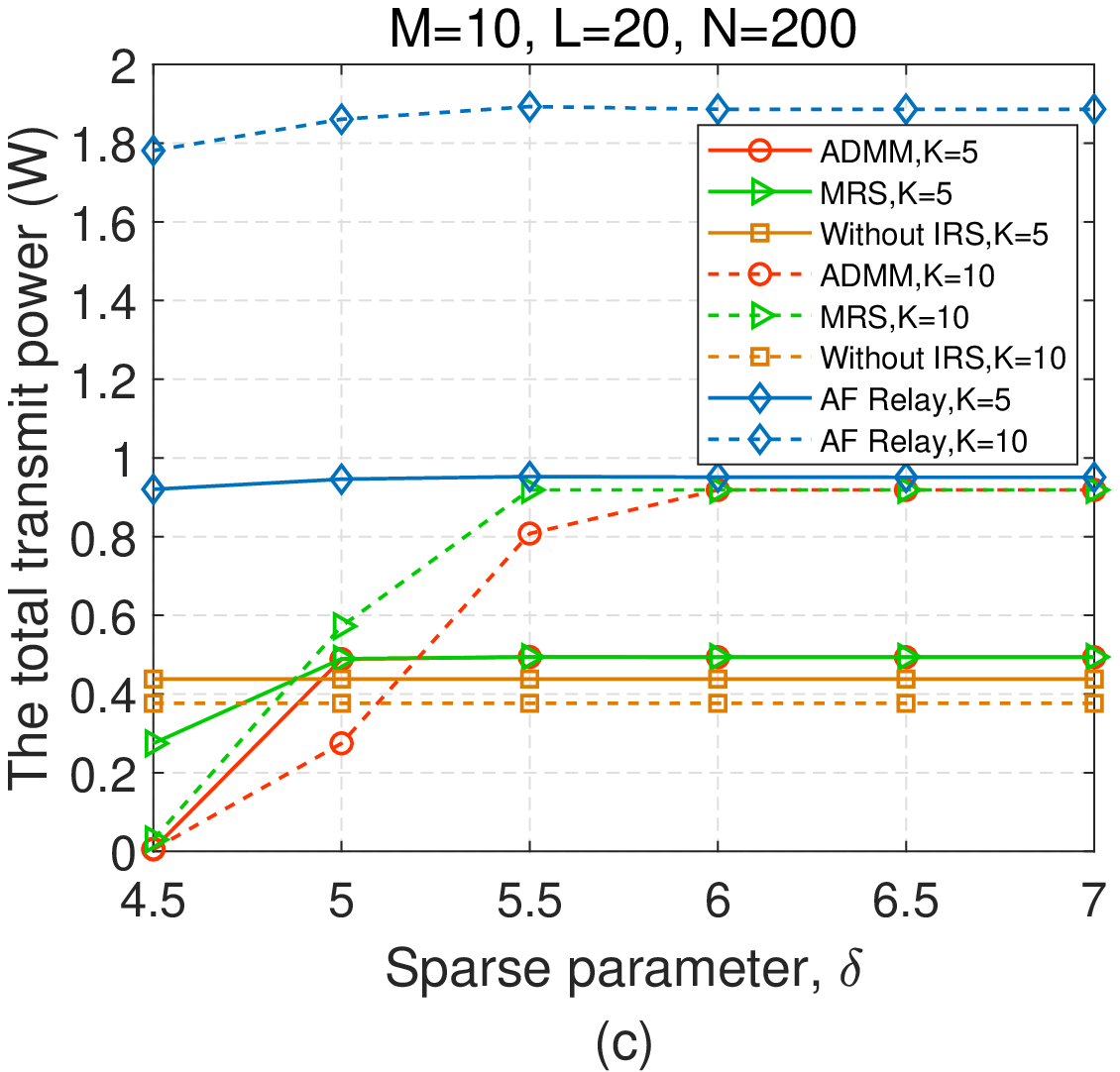}
		\end{minipage}
		\label{fig:4-3}
	}
	\centering
	\subfigure{
		\begin{minipage}[t]{0.38\linewidth}
			\includegraphics[width=2.5in]{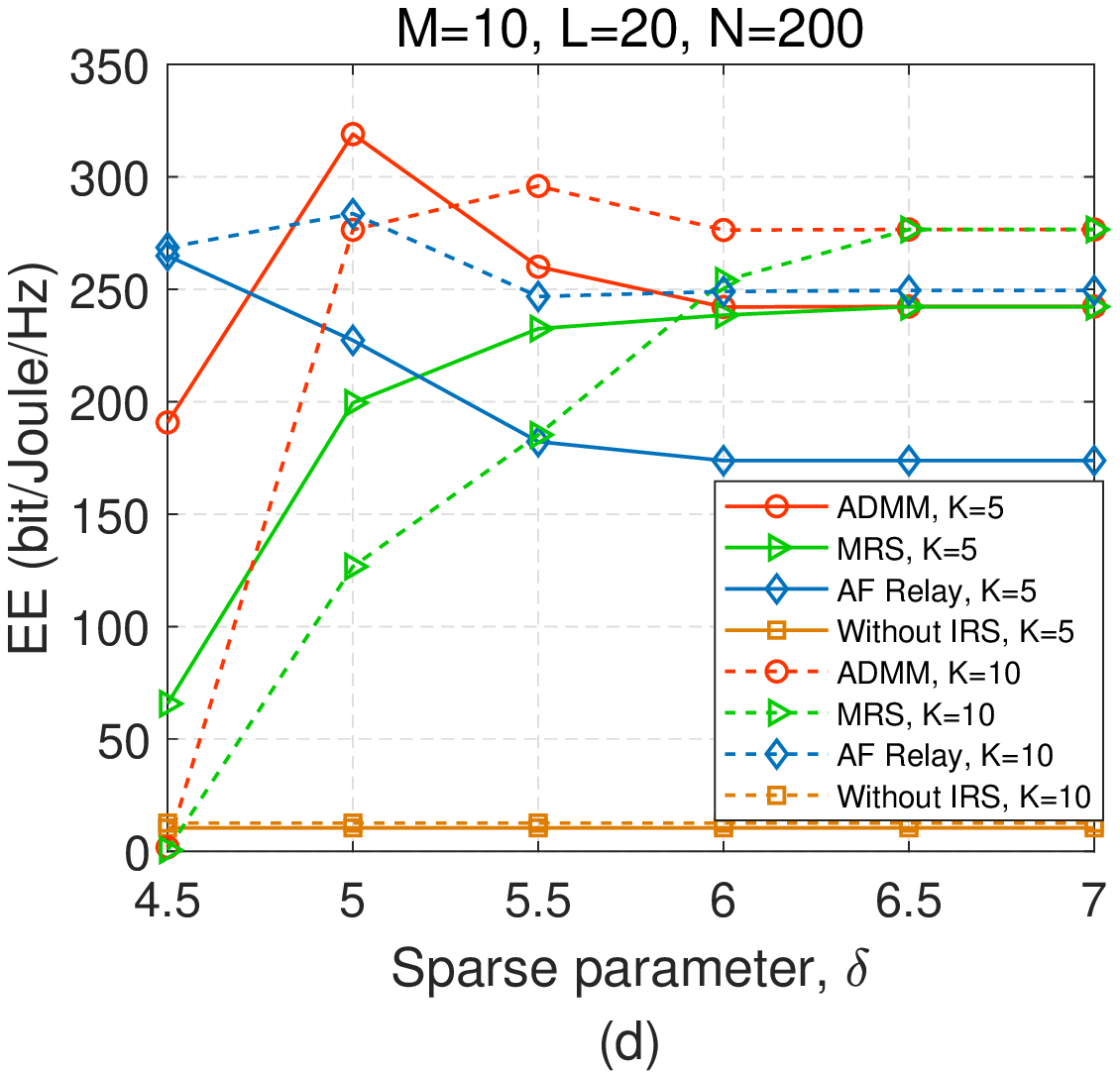}
		\end{minipage}
		\label{fig:4-4}
	}
	\centering
	\caption{(a) Max-min SINR, (b) the number of triggered modules, (c) the total transmit power, and (d) EE versus the sparse parameter $\delta$ using $K=5, 10$, for $M=10$ and $p^{\max}=20\text{~dBm}.$}
	\label{fig:4}
\end{figure*}
\subsubsection{{Baseline 2: AF Relay}}

{We consider the simulation setup for AF relaying, where the conventional AF relay equipped with $N$ antennas in the place of the IRS structure. Besides, we consider the same user terminals' positions and channel realizations in both IRS and AF relay cases.
Similar to \cite{Yang2019Low}, we consider coordinated relay beamforming design to reduce inter-pair interference. The received signal vectors ${\mathbf r}\in{\cal C}^{N\times 1}$ and ${\mathbf y}\in{\cal C}^{K\times 1}$ at relay and all DTs, respectively, are given by
\begin{equation}\label{add:9}
\begin{aligned}
{\mathbf r}=&{\mathbf H}{\mathbf P}{\mathbf z}+{\mathbf v},\\
{\mathbf y}=&{\mathbf G}{\mathbf V}{\mathbf H}{\mathbf P}{\mathbf z}+{\mathbf G}{\mathbf W}{\mathbf v}+{\mathbf u},
\end{aligned}
\end{equation}
where ${\mathbf z}=[z_1, \ldots, z_K]^T\in{\cal C}^{K\times 1},$ ${\mathbf u}=[u_1, \ldots, u_K]^T\in{\cal C}^{K\times 1}$, ${\mathbf H}=[{\mathbf h}_1, \ldots, {\mathbf h}_K]\in{\cal C}^{N\times K},$ ${\mathbf G}=[{\mathbf g}_1, \ldots, {\mathbf g}_K]\in {\cal C}^{K\times N},$ ${\mathbf P}=\text{diag}[\sqrt{p_1}, \ldots, \sqrt{p_K}]\in{\cal C}^{K\times K},$  ${\mathbf v}\in {\cal C}^{N\times 1}$ is the additional white Gaussian noise (AWGN) vector at the receiver of relay with covariance matrix $\sigma_v^2{\mathbf I},$ and ${\mathbf V}\in{\cal C}^{N\times N}$ is the beam matrix at the relay.  In detail, we construct ${\mathbf V}$ by zero forcing (ZF) beam matrix to cancel the interference from/to other ST-DT pairs. Let ${\mathbf v}_{T, k}^{\text{ZF}}\in{\cal C}^{N\times 1}$ denote the transmit beam vector for $d_k$ and ${\mathbf v}_{R,k}^{\text{ZF}}\in{\cal C}^{N\times 1}$ denote the receive beamformer for  $s_k$.
The ZF beam matrix can be constructed  by transmit beam vector ${\mathbf V}_T^{\text{ZF}}=[{\mathbf v}_{T,1}^{\text{ZF}}, \ldots, {\mathbf v}_{T, K}^{\text{ZF}}]\in{\cal C}^{N\times K}$ and the receive beamformer ${\mathbf V}_R^{\text{ZF}}=[{\mathbf v}_{R, 1}^{\text{ZF}}, \ldots, {\mathbf v}_{R, K}^{\text{ZF}}]\in{\cal C}^{N\times K}$, i.e.,  ${\mathbf V}={\mathbf V}_T^{\text{ZF}}{\mathbf V}_R^{\text{ZF}\dag}.$
Subsequently, according to the ZF criterion, the receive SINR at $d_k$ can be expressed as
\begin{equation}\label{add:10}
\begin{aligned}
\text{SINR}_k^{\text{AF}}=\frac{p_k(\zeta_k^{\text{AF}})^2}
{(\zeta_k^{\text{AF}})^2\sigma_v^2||{\mathbf e}_k^T\left( {\mathbf H}^{\dag}{\mathbf H}\right)^{-1}{\mathbf H}^{\dag}||_2^2+\sigma^2},
\end{aligned}
\end{equation}
where $\zeta_k^{\text{AF}}$ in (\ref{add:11}) is a scalar which means the amplify factor for ST-DT $k,$ which is written as 
\begin{equation}\label{add:11}
\zeta_k^{\text{AF}}=\left(\frac{P_r^{\max}(p_k+\sigma_v^2[ ( {\mathbf H}^{\dag}{\mathbf H}   )^{-1}  ]_{k,k})[( {\mathbf G}{\mathbf G}^{\dag})^{-1} ]_{k,k}}
{\sum_{k=1}^K(p_k+\sigma_v^2[( {\mathbf H}^{\dag}{\mathbf H})^{-1}]_{k,k} )^2  ( [({\mathbf G}{\mathbf G}^{\dag}  )^{-1}  ]_{k,k}  )^2}  \right)^{\frac{1}{2}}.
\end{equation}
We refer the interested readers to \cite{Yang2019Low} for a detailed proof for the above result.
}

{Then, for the case of AF relay, we consider the following max-min SINR problem for the design of ${\mathbf p}=[p_1, \ldots, p_K]^T$:
\begin{equation}\label{add:12}
\begin{aligned}
&\max_{\{p_k\}_{k=1}^K}\min_k~\text{SINR}_k^{\text{AF}}\\
\text{s.t.}~&p_k\leq p_k^{\max}, \forall k\in {\cal K}.
\end{aligned}
\end{equation}
We can solve (\ref{add:12}) efficiently using bisection search along with a convex feasibility problem. }

\subsubsection{{Two-block ADMM Algorithm Effectiveness  Verification}}
{To gain insight into the two-block ADMM algorithm, detailed performance comparison is  provided firstly for a small number of modules, which allows comparing with the method of exhaustive search (MES) solution.
Nevertheless, for MES, an exhaustive combinatorial search is required over all possible cases of $2^{M}$. As a result, we can simply set $M=5$ for implementing the brute force search to obtain the optimal solution and provide the performance upper bound for the proposed ADMM algorithm.
Figure \ref{fig:3} shows the influence of the sparse parameter $\delta$ on the max-min SINR and the number of triggered modules,  using different number of user pairs $K\in \{4, 5\}.$
As expected, the number of triggered modules is effectively controlled by $\delta$ and the sparsity of solution is sensitive to the value of $\delta.$
As seen in Fig. \ref{fig:3-1},  the max-min SINR of both cases (i.e., $K\in\{4, 5\}$) achieved by all the schemes first increases and then respectively approaches a constant value with the increasing value of sparse parameter $\delta.$
Correspondingly, in Fig. \ref{fig:3-2}, the number of triggered modules increases monotonically until reaching the total number of available modules.
In fact, introducing larger $\delta$ makes the module size constraint more relaxed, which leads to a less sparse solution.
In other words, when $\delta$ increases, more modules (or reflecting elements) are available for cooperative communication, both SINR and the number of triggered modules increase accordingly.
In addition, there is a special point,  $4,$  which can be obtained by
$\frac{-0.01+\sqrt{(0.01)^2+\sqrt{16\times5\times 5\times 100\times 0.1}}}
{2}\approx 4$ and   $\frac{-0.01+\sqrt{(0.01)^2+\sqrt{16\times5\times 4\times 100\times 0.1}}}
{2}\approx 4.$
As indicated in Lemma 1, the triggered module size constraint is inactive when $\delta>4,$ consequently, the number of triggered modules is $5.$
As seen from Fig. \ref{fig:3-1}, the AF relay outperforms the other four methods, this is mainly because apart from reflecting structure, an active terminal is employed at AF relay.
From the results we notice furthermore that, the considered ADMM algorithm achieves the performance which is slightly close to that of MES and outperforms two other methods besides AF relaying.
}
\begin{figure*}[!t]
\centering
\subfigure{
\begin{minipage}[t]{0.38\linewidth}
\includegraphics[width=2.5in]{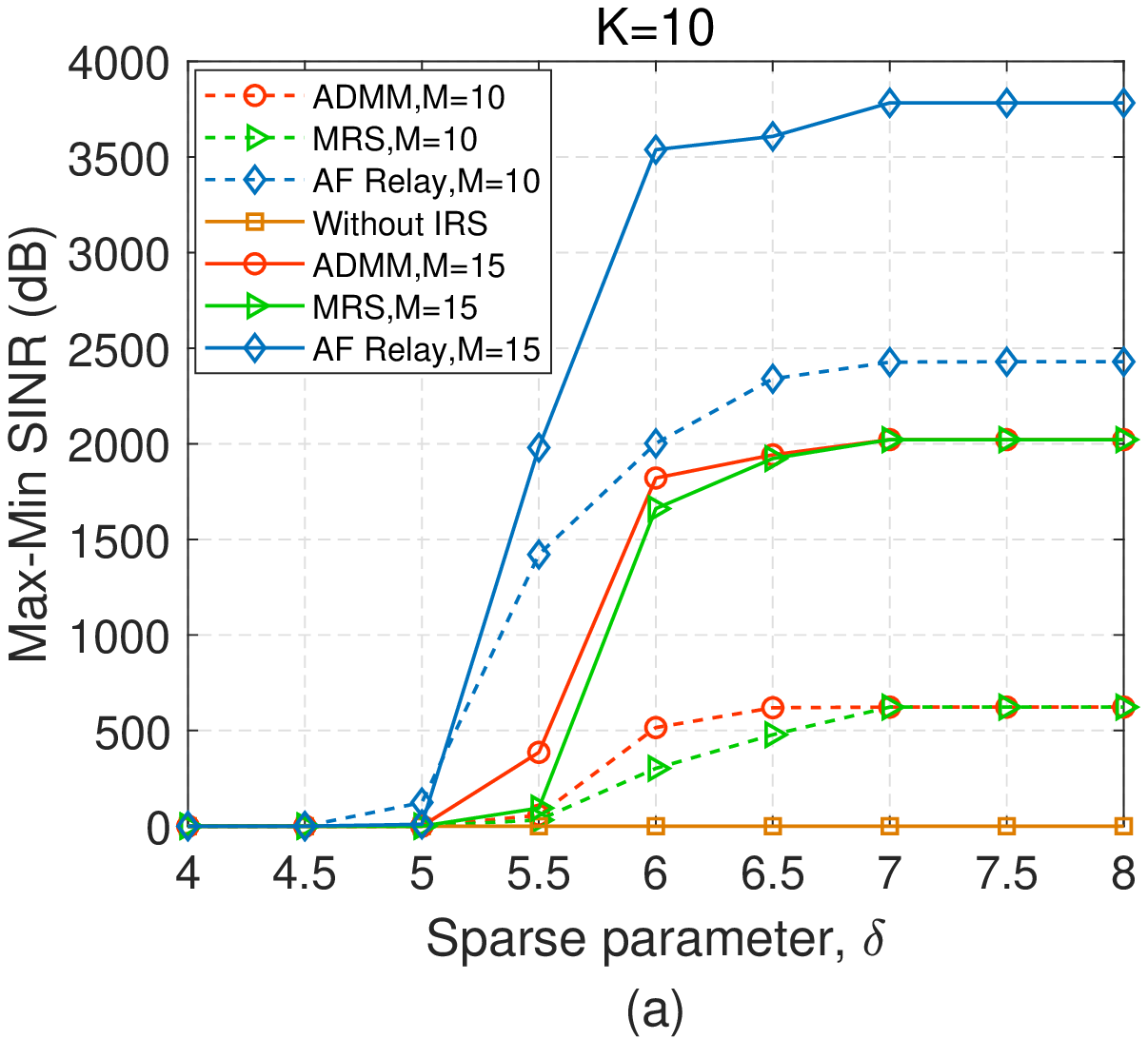}
\end{minipage}
\label{fig:5-1}
}
\centering
\subfigure{
\begin{minipage}[t]{0.38\linewidth}
\includegraphics[width=2.5in]{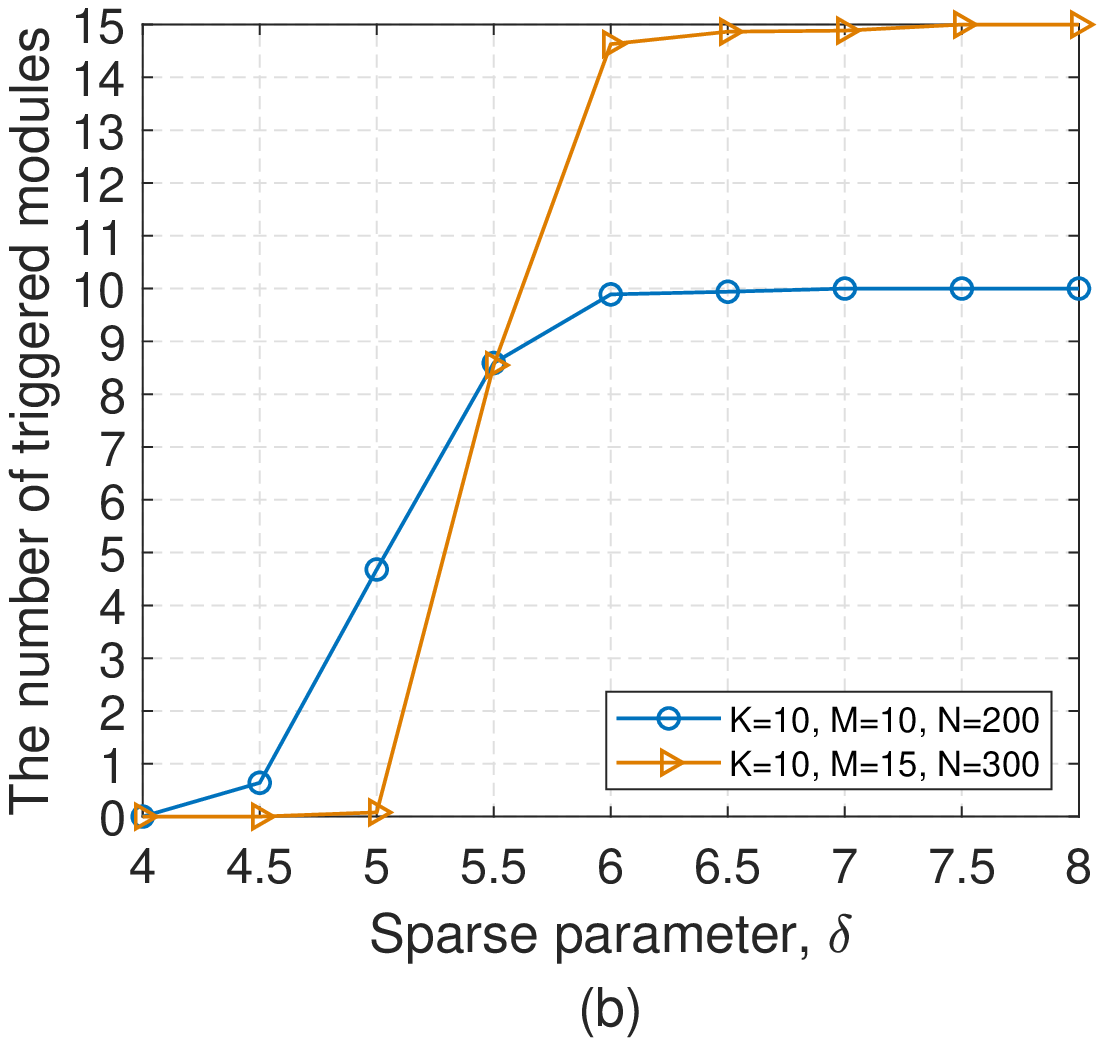}
\end{minipage}
\label{fig:5-2}
}
\centering
\subfigure{
\begin{minipage}[t]{0.37\linewidth}
\includegraphics[width=2.5in]{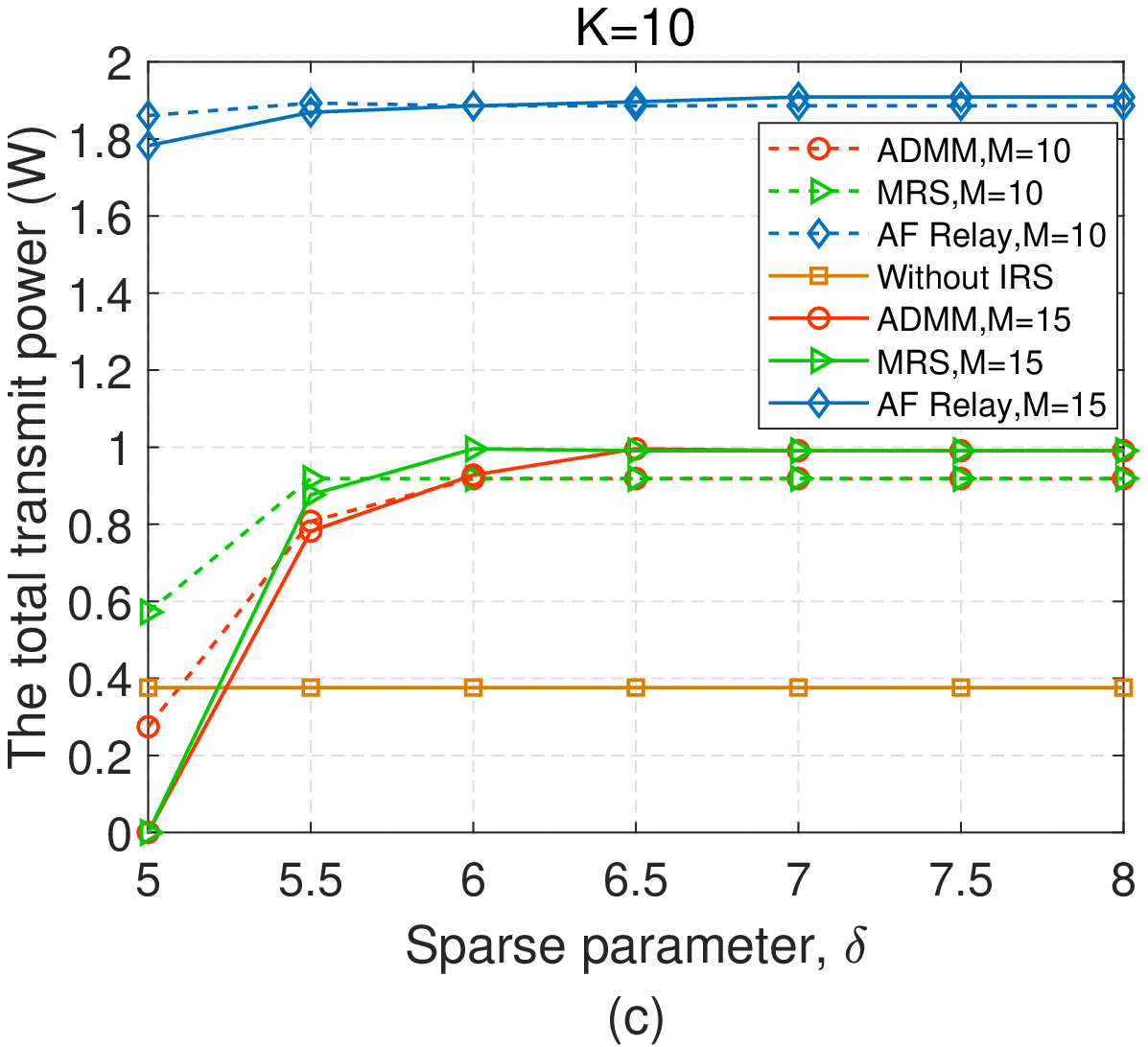}
\end{minipage}
	\label{fig:5-3}
}
\centering
\subfigure{
\begin{minipage}[t]{0.37\linewidth}
\includegraphics[width=2.5in]{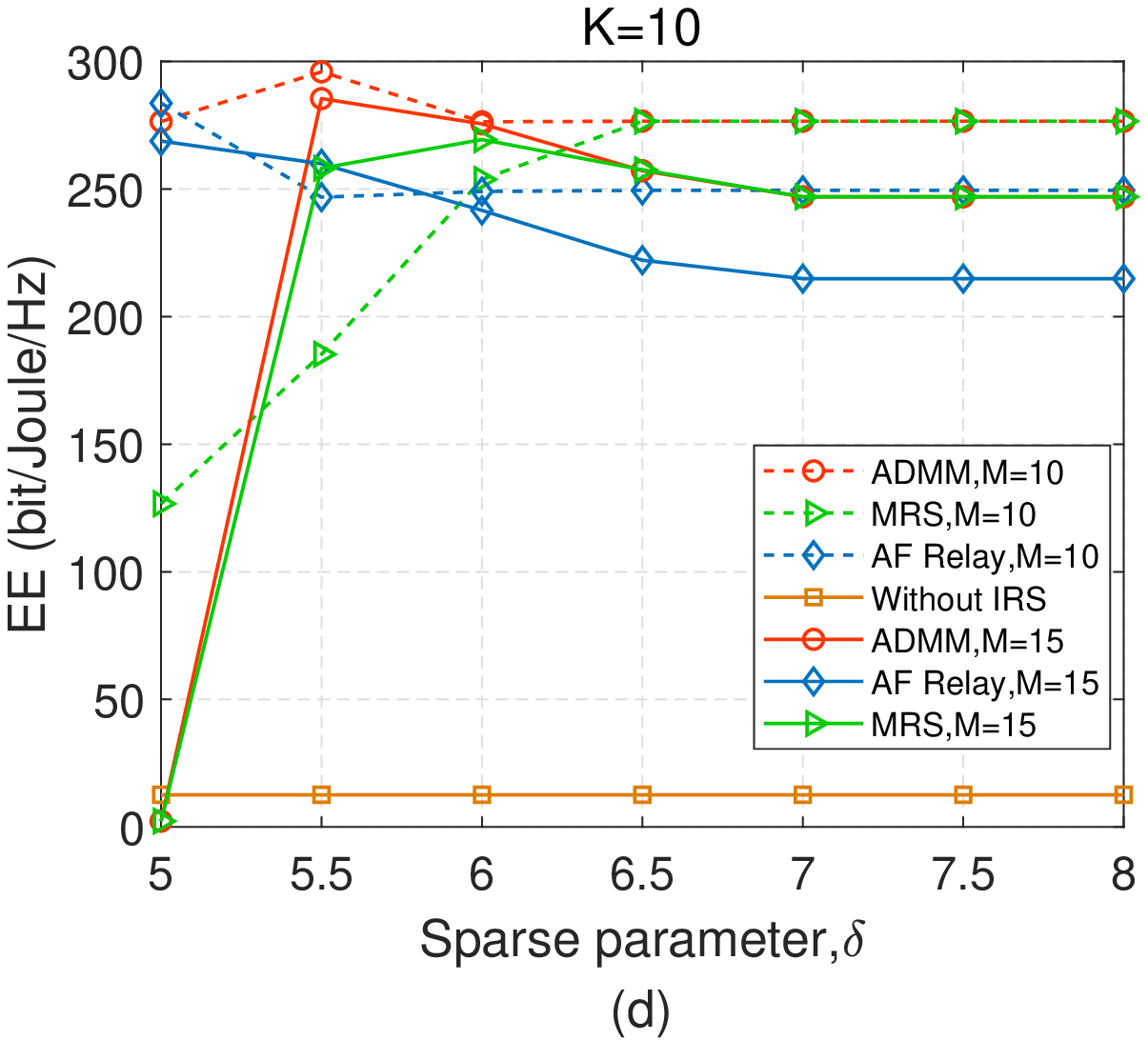}
\end{minipage}
\label{fig:5-4}
}
\centering
\caption{(a) Max-min SINR, (b) the number of triggered modules, (c) the total transmit power, and (d) EE versus the sparse parameter $\delta$ using $M=10, 15$, for $k=10$ and $p^{\max}=20\text{~dBm}.$}
\label{fig:5}
\end{figure*}
\subsubsection{{Performance Comparison under different $K$
}}
{Figures \ref{fig:4-1} and \ref{fig:4-2} illustrate the effects of the number of S-D pairs $K$ on the SINR performance and the number of triggered modules, respectively, based on all above schemes besides the MES.
Two simulation cases with $K=5$ and $K=10$ are shown with the same number of modules $M=10$ at the IRS and the maximum transmit power of each ST is $p^{\max}=20\text{~dBm}.$
From the results, we observe that the SINR achieved by all above schemes besides the baseline 1 first increases and then remain constant, when $\delta$ increases.
Apparently, the reduction of SINR for all above schemes besides the baseline 1 due to increasing $K$ is significant.
This in essence attributes to that more interference will be induced from concurrent transmissions if the IRS serves for more ST-DT pairs.
As seen in Fig.\ref{fig:4-2}, for given values of $M$ and $p^{\max}$, a sparser solution will be achieved for $K=10$ than that of $K=5.$
This is due to the fact that the module size constraint is more stringent with $K=5$ than that of $K=10$ for the same value of $\delta,$ based on the result of Lemma 1.
In addition to our observations in Fig. \ref{fig:4-1} and \ref{fig:4-2} with respect to SINR and the number of triggered modules, in Fig. \ref{fig:4-3}, we evaluate and compare the total transmit power versus the sparse parameter using different number of user pairs for a given value of $M.$
As expected,  for $K\in\{5, 10\},$  the total transmit power by the ADMM algorithm is lower than that of MRS for a given value of $\delta.$
}

\begin{figure*}[!t]
	\centering
	\subfigure{
		\begin{minipage}[t]{0.28\linewidth}
			\includegraphics[width=1.8in]{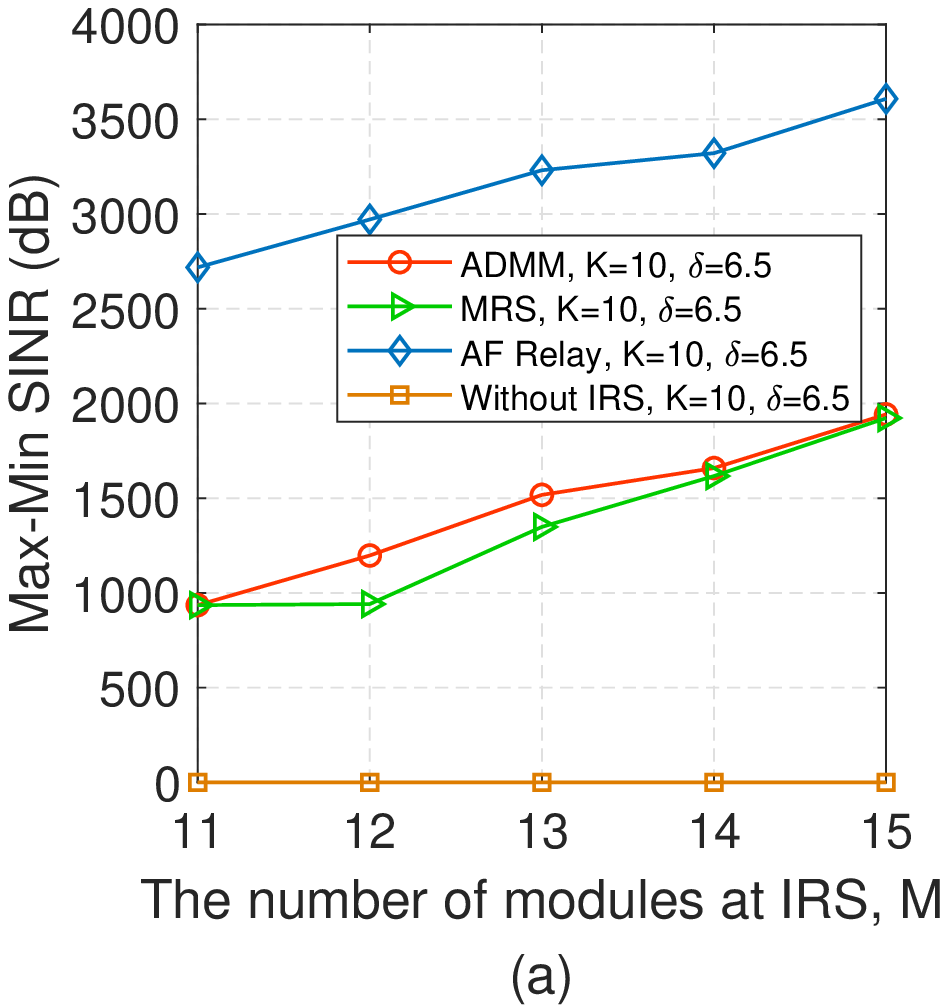}
			\label{fig:6-1}
		\end{minipage}
	}
	\subfigure{
		\begin{minipage}[t]{0.28\linewidth}
			\includegraphics[width=1.8in]{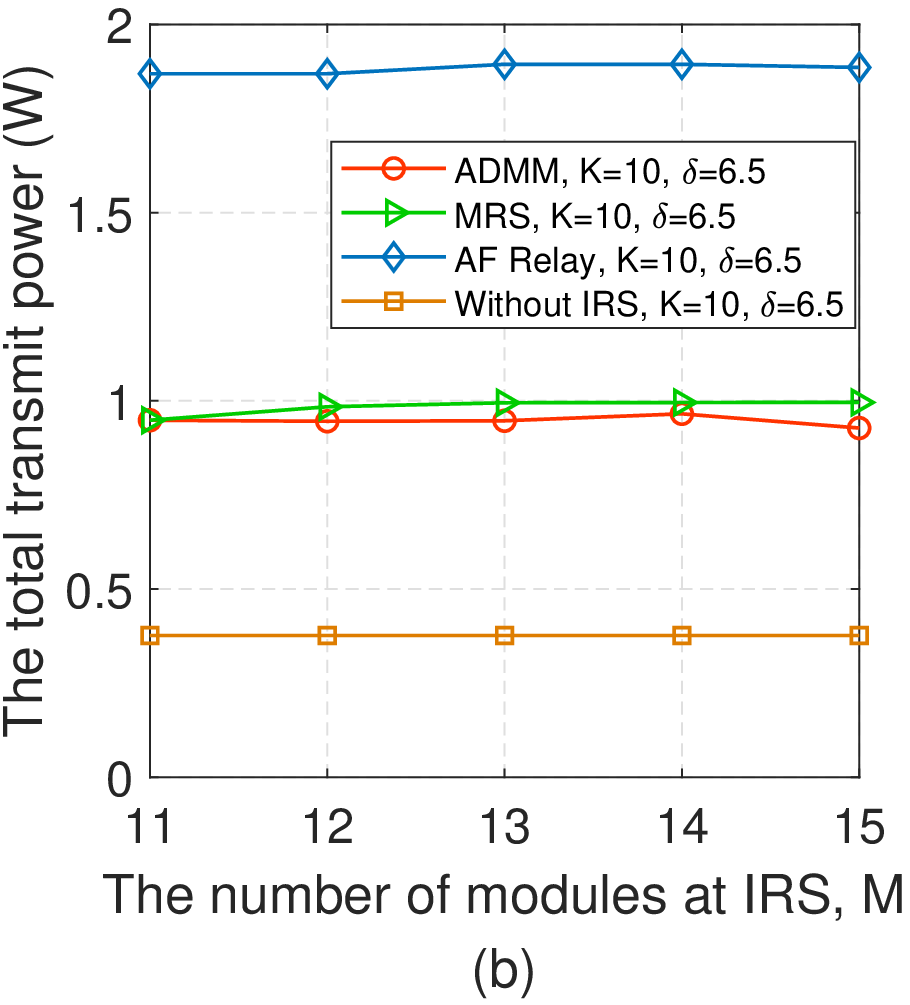}
			\label{fig:6-2}
		\end{minipage}
	}
	\subfigure{
		\begin{minipage}[t]{0.28\linewidth}
			\includegraphics[width=1.8in]{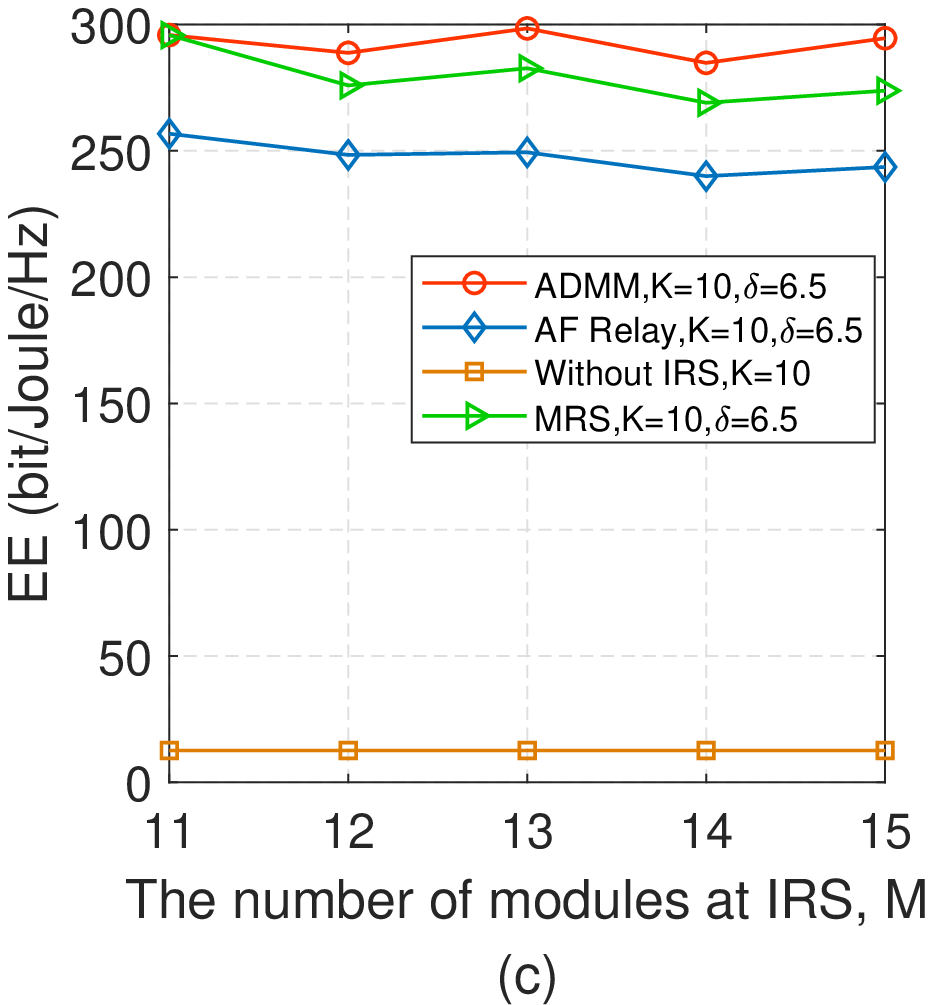}
			\label{fig:6-3}
		\end{minipage}
	}
	\centering
	\caption{(a) Max-min SINR, (b) the total transmit power, and (c) EE versus the number of modules $M$, for $K=10, p^{\max}=20\text{~dBm},$ and $\delta=6.5.$ }
	\label{fig:6}
\end{figure*}
\begin{figure*}[!t]
	\centering
	\subfigure{
		\begin{minipage}[t]{0.28\linewidth}
			\includegraphics[width=1.8in]{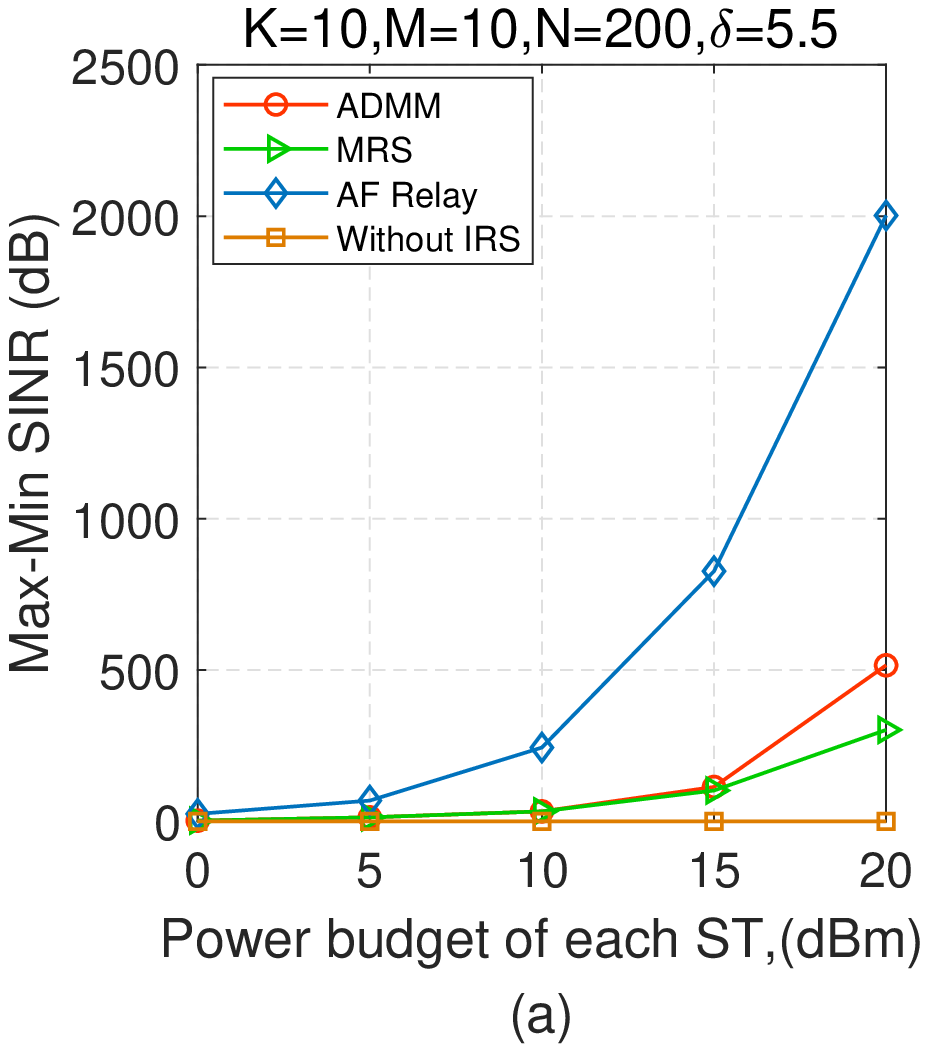}
			\label{fig:7-1}
		\end{minipage}
	}
	\subfigure{
		\begin{minipage}[t]{0.28\linewidth}
			\includegraphics[width=1.8in]{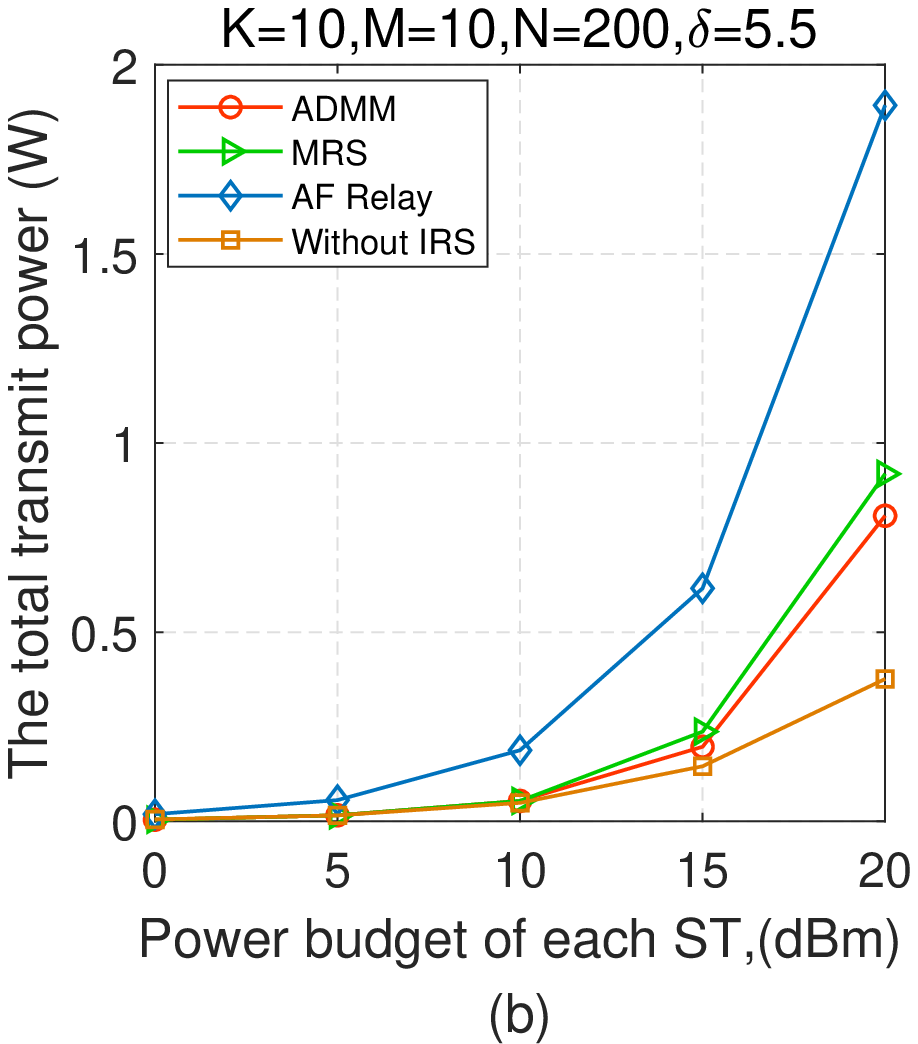}
			\label{fig:7-2}
		\end{minipage}
	}
	\subfigure{
		\begin{minipage}[t]{0.28\linewidth}
			\includegraphics[width=1.8in]{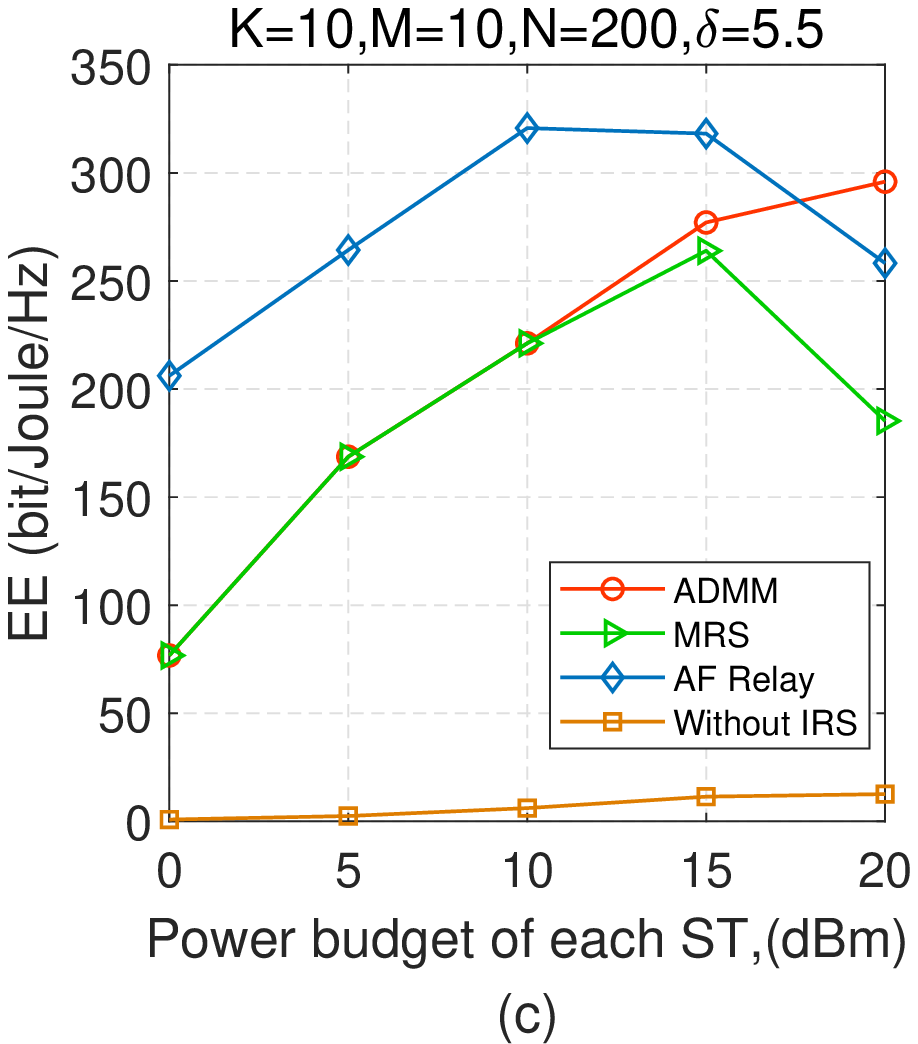}
			\label{fig:7-3}
		\end{minipage}
	}
	\centering
	\caption{(a) Max-min SINR, (b) the total transmit power, and (c) EE versus the maximum transmit power budget of each ST $p^{\max}$, for $K=10, M=10,$ and $\delta=5.5.$}
	\label{fig:7}
\end{figure*}

{To draw more insight for the superiority of the IRS modular structure, we further compare the EE performance, where the EE (bit/Joule/Hz)  is defined as the ratio of the network achievable sum rate and the overall power consumption, i.e., ${\texttt{EE}}=\frac{\sum_{k=1}^K R_k}{P_{\text{total}}}.$
Inspired by \cite{Huang2019Reconfigurable},  the overall power consumption of IRS-aided system can be expressed as
\begin{equation}
\begin{aligned}
P_{\text{total}}^{\text{IRS}}=&\xi_{\text{ST}}\sum_{k=1}^K p_k+KP_{\text{ST}}+KP_{\text{DT}}\\
&+card(\text{subset of triggered modules})\cdot P(L),
\end{aligned}
\end{equation}
where $P_{\text{ST}}$ and $P_{\text{DT}}$ denote the hardware static power dissipated by each ST and DT, respectively, $\xi_{\text{ST}}$ the circuit dissipated power coefficient at each ST, and $P(L)$ is the power consumption of each module having $L$ reflecting elements.
Correspondingly, for AF relay, the total power consumption is given by
\begin{equation}
\begin{aligned}
P_{\text{total}}^{\text{AF}}=&\xi_{\text{AF}}P_r+\xi_{\text{ST}}\sum_{k=1}^Kp_k+KP_{\text{ST}}+KP_{\text{DT}}\\
&+card(\text{ subset of triggered modules})\cdot L\cdot P_\text{antenna},
\end{aligned}
\end{equation}
where $P_r$ is the total transmit power of relay, $P_{\text{antenna}}=P(L)/L$ represents the power consumption of each antenna, and $\xi_{\text{AF}}$ depends on the efficiency of the relay power amplifier. In addition, all the symbols used in the EE simulations are listed in Table I.
To measure the benefits of the proposed modular triggered mechanism with respect to the existing full activation setting, Fig. \ref{fig:4-4} depicts the EE achieved by all mentioned schemes versus sparse parameter for $M=10.$
As seen in Fig. \ref{fig:4-4}, for $\delta<6$ with $K=5$ ($\delta<6.5$ with $K=10$), the proposed ADMM algorithm in IRS-aided communication significantly outperforms both MRS and  AF relaying.
It is interesting to notice that, for simulation settings $K=5$ and $K=10$, the EE achieved by ADMM algorithm first increases and then decreases until to a saturation value, when the value of $\delta$ increases.
The reason is that when $\delta$ relatively small, e.g., $\delta\in[4.5, 5]$ for $K=5$ ($\delta\in[4.5, 5.5]$ for $K=10$), the increase of SINR dominates the maximizing the EE of system in this regime.
By contrast, as the value of sparse parameter becomes is larger than the optimal $\delta,$ e.g., $\delta>5$ for $K=5$ ($\delta>5.5$ for $K=10$), more and more modules are triggered for cooperative communication, consequently, the circuit power consumption dominates the total power consumption rather than the transmit power consumption.
Therefore, for any given network setting, there is an optimal choice of $\delta,$ which leads to the cost-effective reflecting element schedule.
}

\subsubsection{{ Performance Comparison Under Different $M$ }}
{Figure \ref{fig:5} compares the performance achieved by all the mentioned methods using $K=10,$ for two simulation settings as $M=10$ and $M=15$, varied $\delta$ from $4$ to $8$.
Fig. \ref{fig:5-1} demonstrates the SINR performance against the sparse parameter $\delta$ with the number of ST-DT pairs $K=10.$
We can observe again that the SINR performance order is ``${\text{AF Relay}>\text{ADMM}>\text{MRS}>\text{Without IRS}}$'' for $M\in\{10, 15\}.$
Notably, for $\delta>5.5$, the SINR improvement for all the methods besides baseline 1 by increasing $M$ is significant.
In fact, for the number of modules $M=15$,  more reflecting elements available for cooperative communication if $\delta>5.5,$ as shown in Fig. \ref{fig:5-2}.
It is clear from Fig. \ref{fig:5-2} that the sensitivity of the number of triggered modules to $\delta$ decreases with increasing the value of $\delta$ ($\delta>6$), until constraint (\ref{add:3}) is inactive.  
According to Lemma 1, for $M=15,$ the upper bound of $\delta$ to guarantee the validity of constraint (\ref{add:3}) is $\frac{-0.01+\sqrt{0.01^2+4\sqrt{10\times 15\times 300\times 0.1}}}
{2}\approx 8$, whereas the bound for $M=10$ is about $6.5.$
As a result, for $\delta\in(6.5,8)$, constraint (\ref{add:3}) becomes inactive for $M=10$, while it still works for $M=15.$
Correspondingly, in Fig. \ref{fig:5-3} and \ref{fig:5-4}, we plot the total transmit power and the EE as the function of  $\delta.$
It is clear from Fig. \ref{fig:5-3} and \ref{fig:5-4} that an inappropriate choice of the sparse parameter $\delta$ may lead to severe degradation for the EE.
For $K=10$ there exists an optimal choice of $\delta,$ which maximums the EE.
In addition, for $\delta>5.5$, the EE for all the methods decreases until $\delta$ increases to its upper bound as obtained by Lemma 1.
This is because for $\delta>5.5$, the total power consumption is dominated by the circuit power consumption.
}
\begin{figure*}[!t]
	\centering
	\subfigure{
		\begin{minipage}[t]{0.28\linewidth}
			\includegraphics[width=1.8in]{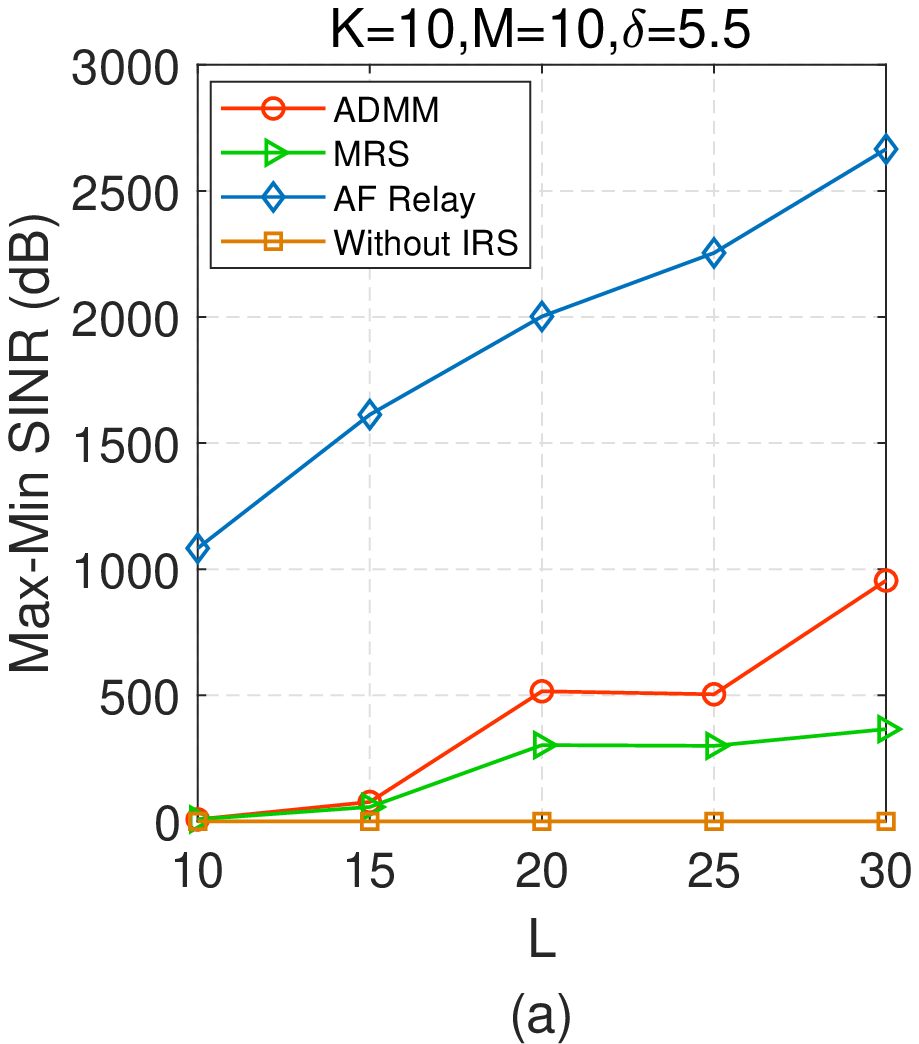}
			\label{fig:8-1}
		\end{minipage}
	}
	\subfigure{
		\begin{minipage}[t]{0.28\linewidth}
			\includegraphics[width=1.8in]{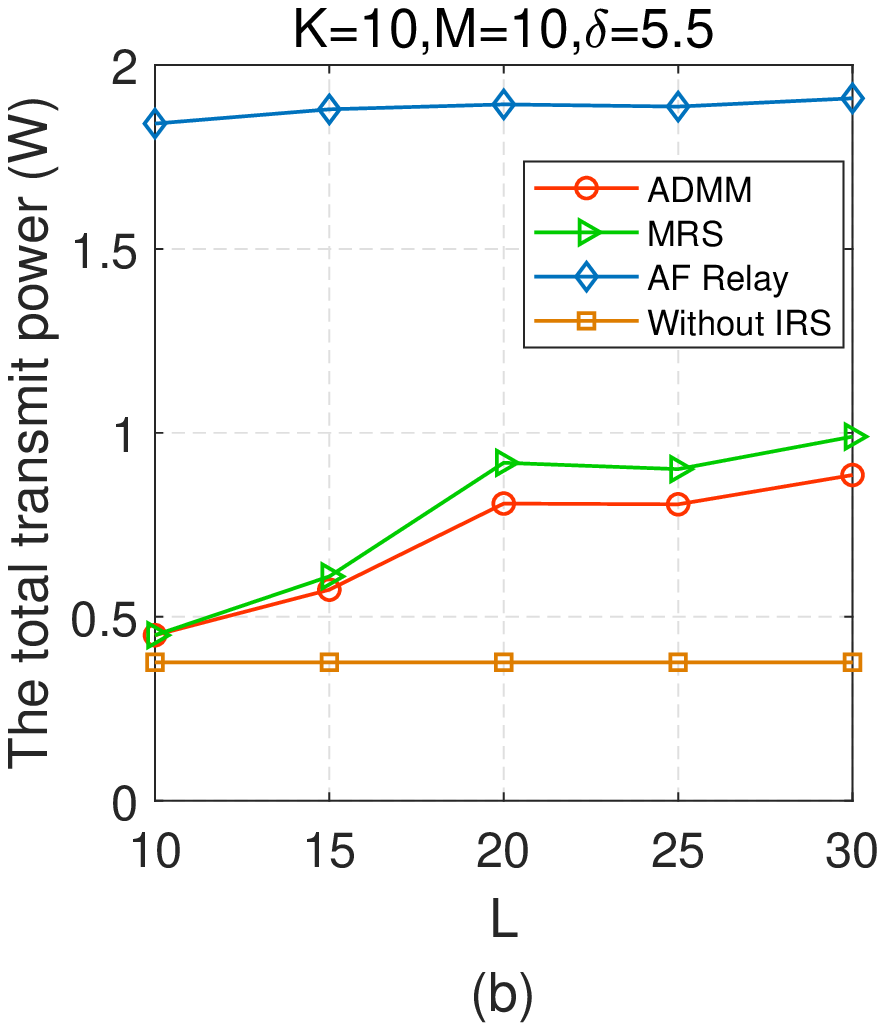}
			\label{fig:8-2}
		\end{minipage}
	}
	\subfigure{
		\begin{minipage}[t]{0.28\linewidth}
			\includegraphics[width=1.8in]{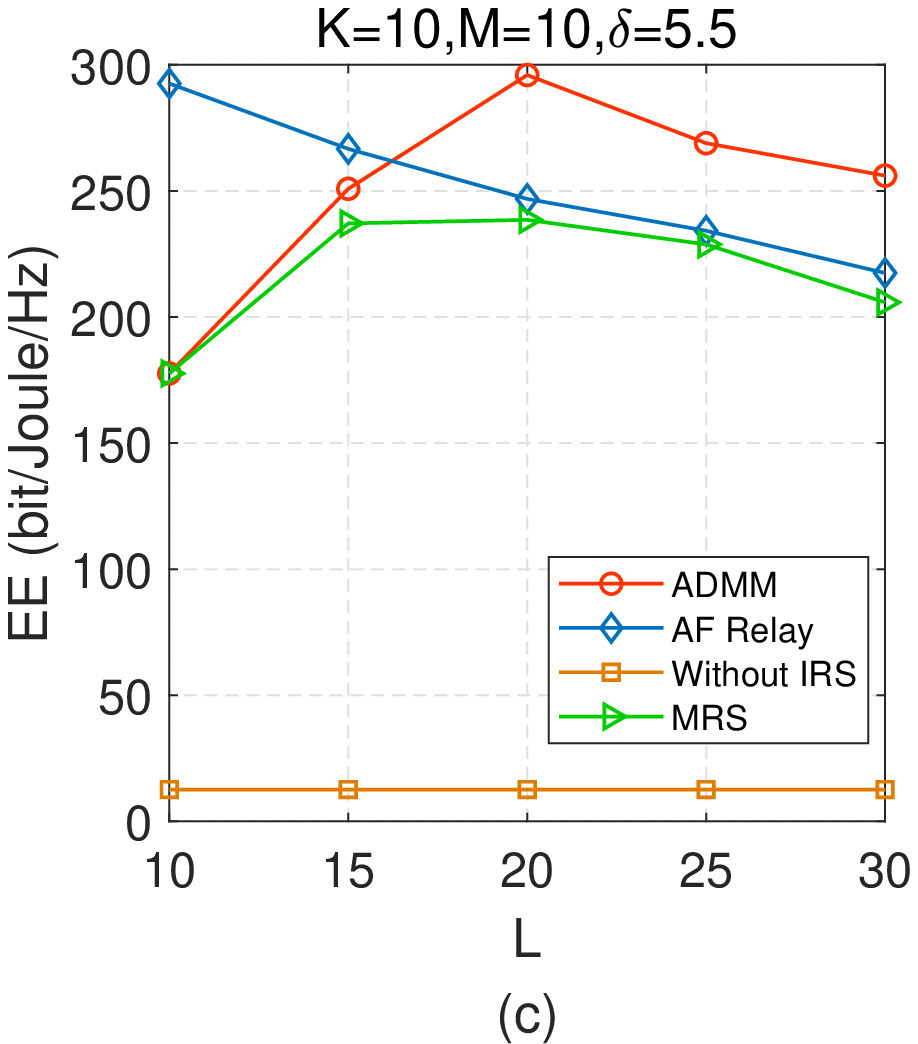}
			\label{fig:8-3}
		\end{minipage}
	}
	\centering
	\caption{(a) Max-min SINR, (b) the total transmit power, and (c) EE versus the number of reflecting elements at each module $L$, for $K=10, M=10, p^{\max}=20\text{~dBm}$, and $\delta=5.5.$}
	\label{fig:8}
\end{figure*}
\subsubsection{{Performance Comparison versus $M$}}
{Figure \ref{fig:6} compares the performance achieved by all the methods using $\delta=6.5$ in the  network setting $K=10$, as a function of the number of modules $M.$
According to Fig. 4(b), for $\delta=6.5,$ the cardinality of the triggered module subset increases very slowly with the number of modules $M,$ varied from $11$ to $15.$
Nevertheless, the module size constraint (12) is still valid for all $M\in\{11, 12, 13, 14, 15\}$.
Figs. \ref{fig:6-1} and \ref{fig:6-2} depict the SINR and the total transmit power consumption, respectively, whereas Fig. \ref{fig:6-3} shows the EE.
From the results, we observe that the SINR achieved by all the methods besides baseline 1 increases as the number of modules, $M$, increases.
As is apparent, the AF relay solution increases the SINR performance at the price of heavy degradation in the EE, irrespectively of the value of $M.$
Comparing Fig. \ref{fig:6-1} to \ref{fig:6-2}, for the given $K=10, \delta=6.5,$ the SINR performance is sensitive to $M$ , when $M\leq 14.$
In other words, for the given values of $K$ and $\delta,$ the improvement of SINR depends only on the triggered modules.
In addition, in Fig. \ref{fig:6-3}, it can be observed that the proposed ADMM outperforms the remaining three algorithms for $K=10$ and $\delta=6.5.$
For given network setting $K=10$ and $\delta=6.5,$  there is an optimal value of $M=13.$
Notably, in Fig. \ref{fig:6-3}, there is a turning point, $14$, due to the less total transmit power at the source side.
}

\subsubsection{{Performance Comparison versus $p^{\max}$}}
{Figure \ref{fig:7} depicts the performance versus the maximum transmit power budget at each ST with $K=10, M=10,$ and $\delta=5.5.$
The observation of Fig.  \ref{fig:7} can be explained with the aid of Lemma 1 and Fig. \ref{fig:5-2} as follows.
According to Lemma 1, it is difficult to distinguish the curves for ADMM and MRS, when the maximum transmit power at each ST is lower than $15\text{~dBm},$ since the module size constraint is inactive in this regime for $K=10, M=10,$ and $\delta=5.5$.
Instead, the considered ADMM and MRS lead to different triggered module subsets, when $p^{\max}>15\text{~dBm},$ consequently, to different performance of SINR, total transmit power, and EE.
As seen in Fig. \ref{fig:7-3}, the EE achieved by all the schemes increases monotonically with the increase of $p^{\max},$ when $p^{\max}\leq 10\text{~dBm},$ since the total transmit power consumption is negligible with respect to the circuit power consumption and, also, the cochannel interference is small compared to the noise power.
In addition,  the AF relay scheme outperforms the other three algorithms when $p^{\max}\leq 17\text{~dBm},$ since the relay transmit power $P_r^{\max}$ is relevant to the SINR.
Therefore, the increase of SINR (sum rate) dominates the EE in this regime.
For larger values of $p^{\max},$ e.g., $p^{\max}>15\text{~dBm},$ the EE achieved by AF relay decreases and  gradually becomes  inferior to the ADMM, with the increase of $p^{\max}.$
This can be explained as follows.
According to Lemma 1 and Fig. \ref{fig:5-2}, for $K=10, M=10, $ and $\delta=5.5,$ the total static power consumption decreases with increasing $p^{\max},$ since a larger $p^{\max}$ implies a sparser solution.
This also means that the increase of transmit power consumption dominates the EE in this regime and its EE decreases rapidly when $p^{\max}>15\text{~dBm}.$
}

\subsubsection{{Impact of $L$}}
{To describe the impact of $L$ on the system performance, Fig. \ref{fig:8} demonstrates the performance achieved by all the methods using $K=10, M=10$ and $\delta=5.5$ with the maximum transmit power at each ST is $p^{\max}=20\text{~dBm}$,  as a function of the number of reflecting elements at each module, $L.$
In Fig. \ref{fig:8-1}, we can observe that the SINR performance of ADMM and MRS slightly increases with the increasing number of reflecting elements at each module.
In fact, for any given values of $K, M, P^{\max}$, and $\delta,$ it can be seen from Lemma 1 that the cardinality of triggered module subset decreases slightly, while the available reflecting element quantity increases,  with increasing $L.$
Correspondingly, in Fig. \ref{fig:8-3}, the EE achieved by both ADMM and MRS first increases and then decreases with increasing the number of reflecting elements at each module $L.$
In fact, employing more reflecting elements at each module can increase the SINR gain at the expense of more power consumption.
In contrast, from Fig. \ref{fig:8-3}, we exhibit that the EE for AF relay decreases monotonically with the increase $L.$
According to Fig. \ref{fig:7-3}, for $K=10, M=10, p^{\max}=20\text{~dBm},$ and $\delta=5.5,$ the EE of AF relay relies on the total power consumption dominated by the static power consumption and its EE decreases rapidly with increasing of $L.$
In addition, for $K=10, M=10, p^{\max}=20\text{~dBm},$ there is an optimal choice of $L=20,$ which leads to the maximum EE.
}
\begin{figure}[!t]
	\centering
	\begin{minipage}[t]{0.8\linewidth}
		\centering
		\includegraphics[width=1\linewidth]{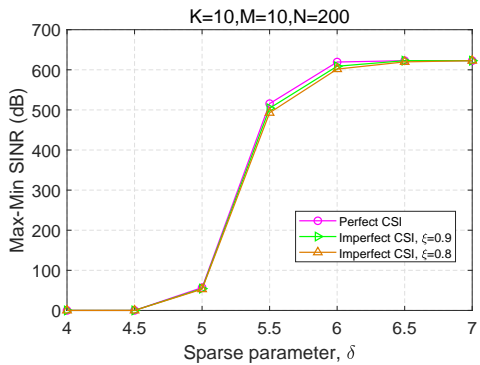}
	\end{minipage}
	\caption{Max-min SINR at different sparse parameter $\delta$, under different levels of reliability of estimate ($K=10, M=10, N=200$).}
	\label{fig:9}
\end{figure}
\subsubsection{{Impact of Imperfect CSI}}
{ Furthermore, Fig. \ref{fig:9} evaluate the impact of imperfect CSI on the performance of the proposed algorithm. Following \cite{Yang2019Low}, the estimated channel vector can be modeled as 
$\hat{\mathbf h}_k=\xi{\mathbf h}_k+\sqrt{1-\xi^2}\vartriangle\hspace{-0.3em}{\mathbf h}_k, \forall k\in{\cal K},$ where $0\leq \xi\leq 1$ represents the level of reliability of the estimate, $\vartriangle\hspace{-0.3em}{\mathbf h}_k\sim{\cal C}{\cal N}(0,\sigma_{\vartriangle\hspace{-0.1em}{\mathbf h}_k}^2{\mathbf I})$ and $\sigma_{\vartriangle\hspace{-0.1em}{\mathbf h}_k}^2=(200/\min_k\{d_{{\mathbf h}_k}\})^2$. The similar model is also used for the estimated channel of ${\mathbf g}_k$, i.e., $\hat{\mathbf g}_k=\xi{\mathbf g}_k+\sqrt{1-\xi^2}\vartriangle\hspace{-0.3em}{\mathbf g}_k.$
Based on the above, in Fig. \ref{fig:9}, we plot the max-min SINR performance of the proposed two-block ADMM algorithm under imperfect CSI for $\delta$ from $4$ to $7$, with $K=10, M=10,$ and $N=200.$
We set the level of reliability of the estimate as $\xi=0.9$ and $ 0.8.$
Fig. \ref{fig:9} shows that the maximum SINR performance loss of the proposed two-block ADMM algorithm with imperfect  CSI ($\xi=0.9$ and $\xi=0.8$) compared to perfect CSI is approximately $7\text{~dB}$ and $10\text{~dB}$, respectively.  
From Fig. \ref{fig:9}, we can observe that the proposed two-block ADMM algorithm is able to cope with imperfect CSI. 
}
\section{Conclusions}

In this paper, we studied the joint problem of active-passive beamforming to IRS-aided P2P communication networks with reflection resource management.
Specifically, the \textit{true} reflection resource management can be realized via triggered modules identification, which builds on the premise of introduction of the modular IRS structure, i.e., the entire IRS consisting of multiple independent and controllable modules.
The objective is to jointly optimize the power allocation and the sparse passive beamforming such that the mimimum SINR is maximized, subject to the power budget at each ST and module size constraint.
To effectively solve the max-min SINR problem, the approximate convex problem was developed for it by using the mixed $\ell_{1,F}\text{-norm}$ in the perspective of group sparse optimization.
The two-block ADMM algorithm was proposed to identify the triggered module subset.
Subsequently, transmit power allocation and the corresponding phase shift for max-min SINR problem without the module size constraint was studied while simultaneously providing STs' power budget.
Finally,  the simulation results demonstrated the convergence and effectiveness of the ADMM algorithm.
In addition, further performance comparison indicated that the introduced modular IRS structure is valuable.

\section*{Appendix A\\
The Proof of Lemma 1}
{
Based on the norm inequality $\left(\sum_{m=1}^M \tilde{\pmb\phi}^m\right)^2\leq M \sum_{m=1}^M(\tilde{\pmb\phi}^m)^2,$ with $\tilde{\pmb\phi}^m$ is the $m\text{th}$ entry of vector $\tilde{\pmb\phi},$ we have
\begin{equation}
\begin{aligned}
\left(\sum_{m=1}^M ||\bar{\pmb\Phi}^m||_F\right)^2&\leq M \sum_{m=1}^M ||\bar{\pmb\Phi}^m||_F^2
=M||\bar{\pmb\Phi}||_F^2\\
&\leq MK\max_{k}\{p_k^{\max}\} ||\pmb\phi||_2^2\\
&\leq MKN\max_{k}\{p_k^{\max}\}.
\end{aligned}
\end{equation}
Therefore, the module size constraint becomes inactive when $\delta(\delta+0.01)\geq \sqrt{MKN\max_{k}\{p_k^{\max}\}},$ i.e., all the $M$ modules tend to be triggered for cooperative communication.
}

\section*{Appendix B\\
The Proof of Theorem 1}

{\em{1) Optimization for problem ${ P}_{\bar{\pmb\phi}_k}:$}}
We rewrite ${\mathbf W}$, ${\pmb\Lambda},$ ${\pmb\Psi}$, and ${\mathbf F}$ in forms of
${\mathbf W}=[{\mathbf w}^1, {\mathbf w}^2, \ldots, {\mathbf w}^K],$ ${\pmb\Lambda}=[{\pmb\lambda}^1, {\pmb\lambda}^2, \ldots, {\pmb\lambda}^K],$
${\pmb\Psi}=\left[{\pmb\psi}^{1}, {\pmb\psi}^{2}, \ldots, {\pmb\psi}^{K+1} \right],$ and ${\mathbf F}=\left[{\mathbf f}^{1}, {\mathbf f}^{2}, \ldots, {\mathbf f}^{K} \right]$, respectively.
And ${\mathbf w}^k\in {\mathbb C}^{N\times 1}$ and ${\pmb\lambda}^k\in {\mathbb C}^{N\times 1}$ represent the $k\text{-th}$ column of matrices ${\mathbf W}$ and ${\pmb\Lambda}$, respectively.
${\pmb\psi}^{k}\in{\mathbb C}^{K\times 1}$ and ${\mathbf f}^{k}\in {\mathbb C}^{K\times 1} $
are the $k\text{-th}$ column of matrices ${\pmb\Psi}$ and ${\mathbf F}$, respectively.

Introducing auxiliary matrix $\tilde{\mathbf h}^k=[\bar{\mathbf h}_{k,1}, \bar{\mathbf h}_{k,2}, \ldots, \bar{\mathbf h}_{k,K}]\in{\cal C}^{N\times K},$ then, $\bar{\pmb\Phi}$ is separable in ${ P}_{\bar{\pmb\phi}_k}$. The optimization problem of $\bar{\pmb\phi}_k, \forall k\in{\cal K}$ is
\begin{equation}\label{s:22}
\begin{aligned}
\min_{\bar{\pmb\phi}_k}& ~\text{Re}\{\text{Tr}[{\pmb\lambda}^{k\dag}({\mathbf w}^k(t)-\bar{\pmb\phi}_k) ]\}+\frac{c}{2}||{\mathbf w}^k(t)-\bar{\pmb\phi}_k ||_2^2\\
&+\text{Re}\{\text{Tr}[{\pmb\psi}^{k \dag}({\mathbf f}^{k}(t)-\tilde{\mathbf h}^{k \dag}\bar{\pmb\phi}_k)]\}\\
&+\frac{c}{2}\left\|{\mathbf f}^{k}(t)-\bar{\mathbf h}^{k \dag}\bar{\pmb\phi}_k \right\|_2^2\\
\text{s. t. ~}& \bar{\pmb\phi}_k^{\dag}{\mathbf e}_n{\mathbf e}_n^{\dag}\bar{\pmb\phi}_k\leq p_k^{\max}, \forall n\in{\cal N}; k\in{\cal K},
\end{aligned}
\end{equation}
which can be easily solved by exploiting the first-order optimality condition. Specifically, we have
\begin{equation}\label{s:23}
\begin{aligned}
&-{\pmb\lambda}^k+c(\bar{\pmb\phi}_k-{\mathbf w}^k(t))-\tilde{\mathbf h}^k{\pmb\psi}^{k}(t)
+c(\tilde{\mathbf h}^k\tilde{\mathbf h}^{k \dag}\bar{\pmb\phi}_k-\tilde{\mathbf h}^k{\mathbf f}^{k}(t)  )\\
&+2(\sum_{n=1}^N\mu_n^k{\mathbf e}_n{\mathbf e}_n^{\dag})\bar{\pmb\phi}_k=0,
\end{aligned}
\end{equation}
and consequently, the optimal solution of ${\pmb\phi}$ is given by
\begin{equation}\label{s:24}
\begin{aligned}
\bar{\pmb\phi}_k(t+1)=&( c{\mathbf I}_{N\times N}+c\tilde{\mathbf h}^k\tilde{\mathbf h}^{k\dag}+2\sum_{n=1}^N \mu_n^k{\mathbf e}_n{\mathbf e}_n^{\dag})^{-1}\\
&\times({\pmb\lambda}^k(t)+c{\mathbf w}^k(t)+\tilde{\mathbf h}^k{\pmb\psi}^{k}(t)
+c\tilde{\mathbf h}^k{\mathbf f}^{k}(t)),\\
\end{aligned}
\end{equation}
where $\mu_n^k$ is the Lagrangian multipliers of $\bar{\pmb\phi}_k^{\dag}{\mathbf e}_n{\mathbf e}_n^{\dag}\bar{\pmb\phi}_k \leq p_k^{\max}, \forall n=1, 2, \ldots, N,$ and should be properly chosen to satisfy the KKT condition \cite{Boyd2009Convex}. And the dual variable $\mu_n^k$ is optimally determined by
\begin{equation}
\mu_n^k=\max\{ p_k^{\max}-\bar{\pmb\phi}_k^{\dag}(t+1){\mathbf e}_n{\mathbf e}_n^{\dag}\bar{\pmb\phi}_k(t+1), 0\}.
\end{equation}

{\em{2) Optimization for Problem ${P}_{\mathbf W}:$}}
The problem of ${\mathbf W}$ is an unconstrained group \textit{Lasso} problem, i.e.,
\begin{equation}\label{s:25}
\begin{aligned}
\text{P}({\mathbf W}):  \min_{\mathbf W} \sum_{m=1}^{M}& \alpha||{\mathbf W}^m||_2
+\text{Re}\{\text{Tr}[{\pmb\Lambda}^{\dag}({\mathbf W}-\bar{\pmb\Phi}(t+1))] \}\\
&+\frac{c}{2}|| {\mathbf W}-\bar{\pmb\Phi}(t+1) ||_2^2.
\end{aligned}
\end{equation}
Let ${\pmb\Lambda}^m\in {\mathbb C}^{L\times K},$ ${\mathbf W}^m\in {\mathbb C}^{L\times K},$ and $\bar{\pmb\Phi}^m\in {\mathbb C}^{L\times K}$
be the $m\text{-th}$ row block of matrices ${\pmb\Lambda}, {\mathbf W},$ and $\bar{\pmb\Phi}$, respectively,  $\forall m=1, 2, \ldots, M.$
Then, ${\cal P}_{\mathbf W}$ can be divided into $M$ independent problems of ${\mathbf W}^m$ for $m=1, 2, \ldots, M$
\begin{equation}\label{s:26}
\begin{aligned}
\text{P}({\mathbf W}^m):&\min_{{\mathbf W}^m}\alpha||{\mathbf W}^m||_2+
\frac{c}{2}|| {\mathbf W}^m-\bar{\pmb\Phi}^m(t+1)||_2^2\\
&+\text{Re}\left\{\text{Tr}\left[{\pmb\Lambda}^{m}(t)\left({\mathbf W}^{m}-\bar{\pmb\Phi}^m(t+1)\right)\right]  \right\}.
\end{aligned}
\end{equation}
The first-order optimality condition for the optimal solution ${\mathbf W}^m(t+1)$ we have
\begin{equation}\label{s:27}
\begin{aligned}
\alpha\partial||{\mathbf W}^m(t+1)||_2=\underbrace{c\bar{\pmb\Phi}^m(t+1)
-{\pmb\Lambda}^m(t)}_{{\pmb\Xi}^m(t)}-c{\mathbf W}^m(t+1),
\end{aligned}
\end{equation}
where $\partial|| {\mathbf W}^m(t+1) ||_2$ is the subgradient of $|| {\mathbf W}^m(t+1)||_2$ defined as
\begin{equation}\label{s:28}
\partial|| {\mathbf W}^m(t+1) ||_2=\frac{{\mathbf W}^m(t+1)}{||{\mathbf W}^m(t+1)||_2}.
\end{equation}
Inserting (\ref{s:28}) into (\ref{s:27}), we can easily obtain
\begin{equation}\label{s:29}
\begin{aligned}
{\mathbf W}^m(t+1)=\left\{\begin{array}{ll}
&{\mathbf 0}, \text{~if~} ||{\pmb\Xi}(t)||_2\leq \alpha\\
&\frac{(||{\pmb\Xi}^m(t)||_2-\alpha){\pmb\Xi}^m(t)}
{c||{\pmb\Xi}^m(t)||_2}, \text{~otherwise}.
\end{array}\right.
\end{aligned}
\end{equation}

\section*{Appendix C\\
The Proof of Theorem 2}
The first-order optimality conditions for ${\mathbf f}_k(t+1)$ are listed as follows:
\begin{align}
&\psi_{k,k}(t)+c\left[f_{k,k}(t+1)-b_{k,k}(t+1) \right]-\frac{\varepsilon_k}{\sqrt{\gamma^{-1}}}=0 \label{s:32-1}\\
& -{\pmb\psi}_{-k,k}(t)-c\left( {\mathbf f}_{-k,k}(t+1)-{\mathbf b}_{-k,k}(t+1)\right)\nonumber\\
&~~~~~~~~~~~~~~~~~~~~~~~~~~~~~~~~~~~=\varepsilon_k\partial|| {\mathbf f}_{-k,k}(t+1) ||_2\label{s:32-2}\\
&\varepsilon_k\left( \frac{f_{k,k}(t+1)}{\sqrt{\gamma^{-1}}}-||{\mathbf f}_{-k,k}(t+1)||_2 \right)=0\label{s:32-3}\\
&\varepsilon_k\geq 0 \label{s:32-4}\\
&\frac{f_{k,k}(t+1)}{\sqrt{\gamma^{-1}}}-||{\mathbf f}_{-k,k}(t+1)||_2 \geq 0,\label{s:32-5}
\end{align}
where $\varepsilon_k$ is the Lagrangian multiplier for $\sqrt{\gamma^{-1}} f_{k,k}\geq ||{\mathbf f}_{-k,k} ||_2$.

Assume that ${\mathbf f}_{k,-k}(t+1)\neq {\mathbf 0}$ first, from (\ref{s:32-1}) and (\ref{s:32-2}), we can easily get
\begin{equation}\label{s:33}
\left\{\begin{array}{ll}
&f_{k,k}(t+1)=\frac{cb_{k,k}(t+1)-\psi_{k,k}(t)+\sqrt{\gamma^{-1}}\varepsilon_k}{c}\\
&{\mathbf f}_{-k,k}(t+1)=\frac{c{\mathbf b}_{-k,k}(t+1)-{\pmb\psi}_{-k,k}(t)}
{c+\varepsilon_k\rho_k},
\end{array}\right.
\end{equation}
where {$\rho_k=(||{\mathbf f}_{-k,k}(t+1)||_F)^{-1}$}, $\varepsilon_k$ should be properly chosen such that KKT complementary condition should be satisfied.
If $\left( \frac{f_{k,k}(t+1)}{\sqrt{\gamma}}-||{\mathbf f}_{-k,k}(t+1)||_2 \right)\big|_{\varepsilon_k=0}\geq 0$ or equivalently
$\sqrt{\gamma^{-1}}\left(c{b}_{k,k}(t+1)-\psi_{k,k}(t)\right)\geq ||c{\mathbf b}_{-k,k}(t+1)-{\pmb\psi}_{-k,k}(t)||_2,$
we have $\varepsilon_k=0.$
Otherwise, we have $\sqrt{\gamma^{-1}}f_{k,k}(t+1)=||{\mathbf f}_{-k,k}(t+1)||_2$ for some $\varepsilon_k>0.$
In the case of $||c{\mathbf b}_{-k,k}(t+1)-{\pmb\psi}_{-k,k}(t)||_2>\sqrt{\gamma^{-1}}(cb_{k,k}(t+1)-\psi_{k,k}(t)).$
Combining $\rho_k||{\mathbf f}_{-k,k}||_2=1$ and $\sqrt{\gamma^{-1}}f_{k,k}(t+1)=||{\mathbf f}_{-k,k}(t+1)||_2$, we obtain
\begin{equation}\label{sss:1}
\begin{aligned}
\varepsilon_k=&\frac{1}{1+\gamma}\left[\gamma||c{\mathbf b}_{-k,k}(t+1)-{\pmb\psi}_{-k,k}(t) ||_2\right.\\ &\left. -\sqrt{\gamma}(cb_{k,k}(t+1)-\psi_{k,k}(t))\right]\\
\rho_k=&\frac{1+\gamma}{c^{-1}}\left[ c||{\mathbf b}_{-k,k}(t+1)-{\pmb\psi}_{-k,k}(t)||_2\right.\\
&\left.+\sqrt{\gamma}(cb_{k,k}(t+1)-\psi_{k,k}(t))\right].
\end{aligned}
\end{equation}


\end{document}